\documentclass{entcs} 
\usepackage{etex}
\usepackage{entcsmacro}
\usepackage{graphicx}
\usepackage{tikz}
\usepackage{verbatim}
\usepackage{cite}
\usepackage[all]{xy}
\usepackage{array}
\usepackage{tikz}
\usetikzlibrary{trees}
\usetikzlibrary{positioning}
\usepackage{slashbox}
\usepackage{ifsym}

\usepackage{fancyvrb}
\usepackage{epsfig}
\usepackage{epsf}
\usepackage{verbatim}
\usepackage{amssymb}
\usepackage{amsfonts,amsmath,latexsym}
\usepackage{bbm}
\usepackage{pgfplots}
\sloppy

\newcommand{\At}{\mathrm{\textrm{At}}}

\newcommand{\ls}{\mathcal{L}_{\Sigma}}


\newtheorem{construction}{Construction}[section]
\begin{document}
\begin{frontmatter}
  \title{Exploiting Parallelism in Coalgebraic Logic Programming} 
\author[Dundee]{Ekaterina Komendantskaya\thanksref{coemail}\thanksref{ALL}}
\author[Osna]{Martin Schmidt\thanksref{myemail}}
 \author[Dundee]{J\'onathan Heras\thanksref{coemail2}\thanksref{ALL}}

\address[Dundee]{School of Computing\\University of Dundee\\
   UK} 
  \address[Osna]{Institute of Cognitive Science\\ University of Osnabr\"uck\\
    Germany} 
    \thanks[coemail]{Email:
    \href{mailto:katya@computing.dundee.ac.uk} {\texttt{\normalshape
        katya@computing.dundee.ac.uk}}}
    \thanks[myemail]{Email:
    \href{mailto:martisch@uos.de} {\texttt{\normalshape
        martisch@uos.de}}} 
    \thanks[coemail2]{Email:
    \href{mailto:jonathanheras@computing.dundee.ac.uk} {\texttt{\normalshape
        jonathanheras@computing.dundee.ac.uk}}}  
    \thanks[ALL]{The work was supported by EPSRC grants EP/J014222/1 and EP/K031864/1.}
        
\begin{abstract} 
We present a parallel implementation of Coalgebraic Logic Programming (CoALP) in the programming language Go. 
CoALP was initially introduced to reflect coalgebraic semantics of logic programming, 
with coalgebraic derivation algorithm featuring both corecursion and parallelism. 
Here, we discuss how the coalgebraic semantics influenced our parallel implementation of logic programming.
\end{abstract}

\begin{keyword}
 Coinduction, Corecursion, Guardedness, Parallelism, GoLang.
\end{keyword}

\end{frontmatter}

\section{Introduction}\label{sec:introduction}

In the standard formulations of Logic Programming (LP), such as in Lloyd's
book~\cite{Llo88}, a first-order logic program $P$ consists of a finite
set of clauses of the form
$A \leftarrow A_1, \ldots , A_n$,
where $A$ and the $A_i$'s are atomic first-order formulae, typically containing
free variables, and where $A_1, \ldots , A_n$ is understood to mean
the conjunction of the $A_i$'s: note that $n$ may be $0$.

SLD-resolution, which is a central algorithm for LP,
takes a goal $G$, typically written as
$ \gets B_1, \ldots , B_n$,
where the list of $B_i$'s is again understood to mean a conjunction of
atomic formulae, typically containing free variables, and constructs a
proof for an instantiation of $G$ from substitution instances of the
clauses in $P$~\cite{Llo88}.  The algorithm uses Horn-clause logic,
with variable substitution determined by most general unifiers to make a selected
atom in $G$ agree with the head of a clause in $P$, then proceeding
inductively.

Although the operational semantics of LP was initially given by the SLD-resolution algorithm, 
it was later reformulated in 
SOS style in \cite{AmatoLM09,BonchiM09,BruniMR01,CominiLM01,GLM95}, and in terms of  algebraic (fibrational) semantics in
\cite{AmatoLM09,BruniMR01,KP96,KPS12-2}.  
Logic programs resemble, and indeed induce, transition systems or
rewrite systems, hence coalgebras. That fact has been used to study
their operational semantics, e.g., in~\cite{BonchiM09,CominiLM01}. 
Finally, the coalgebraic (fibrational) semantics of LP was introduced in \cite{KMP10,KP11,KP11-2,KPS12-2,BonchiZ13}.
The main constructions and results of  \cite{KMP10,KP11,KP11-2,KPS12-2} will be explained in Sections \ref{sec:com} and \ref{sec:fib}.

When studying the coalgebraic (structural operational)  semantics of LP  \cite{KMP10,KP11,KP11-2,KPS12-2}, we noticed that some constructs of it  
suggest properties 
alien to the standard algorithm of SLD-resolution \cite{Llo88}; namely, parallelism and corecursion.
This paper will only focus on parallelism, but see \cite{KPS12-2} for a careful discussion of the relation between the two issues.
In particular, \cite{KMP10} first noticed  the relation of the coalgebraic semantics to parallel LP in the variable-free case \cite{GuptaC94}, as Section \ref{sec:com} explains.
However, extending those results to the first-order case with the fibrational coalgebraic semantics \cite{KP11,KPS12-2} again 
exposed novel constructions, this time alien to the existing  models of LP parallelism~\cite{GPACH12}. The ``fibers'' present
in it suggested restriction of the  
 unification algorithm standardly incorporated in 
SLD- and  and-or-parallel derivations \cite{GuptaC94} to term-matching; see Section~\ref{sec:fib}. 
This inspired us to introduce a new (parallel and corecursive) derivation algorithm of \emph{CoAlgebraic LP (CoALP)}  (see Section~\ref{sec:coder}).
The algorithm was shown sound and complete relative to the coalgebraic semantics \cite{KP11-2,KPS12-2}.

The original contribution of this paper is parallel implementation of CoALP in the language Go~\cite{S12}.
Go is a strongly typed and compiled programming language. 
It provides an easy built-in way to use high level constructs to implement parallelism in the form of \emph{goroutines} and channels to communicate between them;  
this model for providing high-level linguistic support for concurrency comes from Hoare's Communicating Sequential Processes \cite{Hoare:1978:CSP}. Go has an easy 
to set up and use tool-chain  and allows
for rapid prototyping with its fast compile times, array bounds checking and automatic memory management. In addition, it allows low level programming and produces fast binaries.

Here, we present a careful study of the influence of the constructs arising in the coalgebraic fibrational semantics \cite{KP11,KP11-2,KPS12-2} on\\
	(a) CoALP's parallel derivation algorithms;\\
	(b) the Go implementation of CoALP.\\
Thus, apart from achieving the goal of 
introducing CoALP\rq{}s implementation,
this paper
will serve as an exercise in applying coalgebra in programming languages.

The rest of the paper is structured as follows.   
Section \ref{sec:backgr} is a background section, and it introduces various existing (sequential and parallel) derivation 
algorithms for LP. 
The rest of the sections follow a common pattern: each splits into four subsections, such that 
the first subsection studies some constructions arising in the coalgebraic semantics \cite{KP11,KPS12-2}, the second subsection shows how 
those constructions transform into a parallel algorithm in CoALP; the third subsection explains their implementation in Go; and the last ``case study''
subsection tests the efficiency of their parallel implementation. 
Note the emphasis on describing the ``constructive'' fragments of the semantics, that is, fragments that give rise to concrete algorithms and computations.
In this way, Section~\ref{sec:par} considers propositional (variable-free) version of CoALP and related Datalog language. 
Section \ref{sec:cotrees} focuses on CoALP's fibrational semantics and its impact on parallelism. Section \ref{sec:coderivations} discusses semantics 
and parallelisation of full first-order fragment of CoALP. 
To reinforce the trend of tracing the constructive influence of the coalgebraic semantics on implementation of CoALP, we recover, where possible, a  
constructive reformulations of completeness results of \cite{KP11-2,KPS12-2}; and mark them as \lq\lq{}Constructive Completeness\rq\rq{} theorems/lemmas.
Where a constructive version is impossible, we discuss the reasons.
Finally, in Section \ref{sec:evaluation}, we conclude the paper.

The Go implementation of CoALP, as well as all the examples and benchmarks presented throughout the paper can be downloaded from~\cite{KPS13}.

\section{Background: Logic Programs and SLD-derivations}\label{sec:backgr}

We first recall some basic definitions from~\cite{Llo88}, and then proceed with discussion of parallel SLD-derivations.

  A \emph{signature} $\Sigma$ consists of a set of \emph{function
    symbols} $f,g, \ldots$ each equipped with a fixed
  \emph{arity}. The arity of a function symbol is a natural number
  indicating the number of its arguments.  Nullary (0-ary) function
  symbols are allowed: these are called \emph{constants}.
%
Given a countably infinite set $Var$ of variables, 
the set $Ter(\Sigma)$ of \emph{terms} over $\Sigma$ is defined
inductively:
$x \in Ter(\Sigma)$ for every $x \in Var$; and, 
if $f$ is an n-ary function symbol ($n\geq 0$) and $t_1,\ldots
  ,t_n \in Ter(\Sigma) $, then $f(t_1,\ldots
  ,t_n) \in Ter(\Sigma)$.  
Variables will be denoted $x,y,z$, sometimes with indices $x_1,
x_2, x_3, \ldots$.
%
A \emph{substitution} is a map $\theta : Ter(\Sigma) \rightarrow
Ter(\Sigma) $ which satisfies
$ \theta (f(t_1,\ldots
  ,t_n)) \equiv f(\theta (t_1),\ldots
  ,\theta(t_n))$
for every n-ary function symbol $f$.

We define an \emph{alphabet} to consist of a signature $\Sigma$, the set $Var$, and a set  
of \emph{predicate symbols} $P, P_1, P_2, \ldots$ each assigned an arity.
Let $P$ be a predicate symbol of arity $n$ and $t_1, \ldots, t_n$ be
terms. Then
$P(t_1, \ldots, t_n)$ is a \emph{formula} (also called an atomic formula or
an \emph{atom}).
The \emph{first-order language $\mathcal{L}$} given by an alphabet consists of
the
set of all formulae constructed from the symbols of the alphabet.

Given a substitution $\theta$ 
and an atom $A$, we write $A\theta$ for the atom
given by applying the substitution $\theta$ to the variables appearing
in  
$A$. Moreover, given a substitution $\theta$ and a list of atoms
$(A_1,\ldots ,A_k)$, we write $(A_1,\ldots,A_k)\theta$ for the simultaneous
substitution of $\theta$ in each $A_m$. 

  Given a first-order language $\mathcal{L}$, a \emph{logic program}
  consists of a finite set of clauses of the form $A \gets A_1, \ldots
  , A_n$, where $A, A_1, \ldots ,A_n$ ($n\geq 0$) are atoms.  The atom
  $A$ is called the \emph{head} of a clause, and $A_1, \ldots, A_n$ is
  called its \emph{body}.  Clauses with empty bodies are called
  \emph{unit clauses}.
We call a term, a formula, or a clause \emph{ground}, if it does not contain variables.

\begin{example}[\texttt{BinaryTree}]\label{ex:binarytree}
The definition \texttt{btree} describes a set of binary trees whose nodes are bits.

\begin{verbatim}
 1. bit(0). 
 2. bit(1).
 3. btree(empty).
 4. btree(tree(L,X,R)) :- btree(L), bit(X), btree(R).
\end{verbatim}

\end{example}

A \emph{goal} is given by $\gets B_1, \ldots B_n$, where $B_1, \ldots
B_n$ ($n\geq 0$) are atoms.

Let $S$ be a finite set of atoms. A substitution $\theta$ is called
  a \emph{unifier} for $S$ if, for any pair of atoms $A_1$ and $A_2$
  in $S$, applying the substitution $\theta$ yields $A_1\theta =
  A_2\theta$.  A unifier $\theta$ for $S$ is called a \emph{most
    general unifier} (mgu) for $S$ if, for each unifier $\sigma$ of
  $S$, there exists a substitution $\gamma$ such that $\sigma =
  \theta\gamma$. If $\theta$ is an mgu for $A_1$ and $A_2$, moreover, $A_1\theta = A_2$, then $\theta$ is a \emph{term-matcher}.


\begin{definition}\label{df:SLD}
Let a goal $G$ be $\gets A_1,\ldots ,A_m, \ldots, A_k$ and a clause $C$ be
$A\gets B_1, \ldots ,B_q$. Then $G'$ is \emph{derived} from $G$ and $C$ using
mgu
$\theta$ if the following conditions hold:
\begin{itemize}
\item[$\bullet$] $\theta$ is an \emph{mgu} of \textbf{the \emph{selected}}
atom $A_m$  in $G$ and $A$;
\item[$\bullet$] $G'$ is the goal $\gets(A_1, \ldots, A_{m-1},B_1, \ldots
,B_q, A_{m+1},
\ldots, A_k)\theta$.
\end{itemize}

\end{definition}

A clause $C^*_i$ is a \emph{variant} of the clause $C_i$ if $C^*_i =
C_i \theta$, with $\theta$ being a variable renaming substitution such
that variables in $C_i^*$ do not appear in the derivation up to
$G_{i-1}$. This process of renaming variables is called
\emph{standardising the variables apart}; we assume it throughout the
paper without explicit mention.

\begin{definition}\label{df:SLD2}
  An \emph{SLD-derivation} of $P\cup \{G\}$ consists of a sequence of
  goals $G=G_0, G_1, \ldots$ called \emph{resolvents}, a sequence
  $C_1,C_2, \ldots$ of variants of program clauses of $P$, and a
  sequence $\theta_1,\theta_2,\ldots$ of mgus such that each $G_{i+1}$
  is derived from $G_i$ and $C_{i+1}$ using $\theta_{i+1}$. An
  \emph{SLD-refutation} of $P \cup \{G\}$ is a finite SLD-derivation
  of $P \cup \{G\}$ that has the empty clause $\Box$ as its last
  goal. If $G_n = \Box$, we say that the refutation has length $n$. The composition $\theta_1, \theta_2, \ldots$ is called 
	\emph{computed answer}.
\end{definition}

Depending on the algorithm behind the choice of the ``selected atom'', and behind the choice of the program clause for a resolvent,
the proof-search strategy may differ. The most common strategy selects the left-most goal and top-most clause \cite{Llo88}. But the strategy may be changed to a
random choice, cf.~\cite{KPS12-2}. Also, there is an obvious choice between the breadth-first and depth-first search, if we view all the  SLD-choices as a tree.

\begin{example}
An SLD-derivation for the goal $\mathtt{btree(X)}$, with left-most atom, top-most clause and depth-first search is shown in the left side of Figure~\ref{fig:sld-derivation}.
Different strategies for that goal are represented in the right side of Figure~\ref{fig:sld-derivation}.

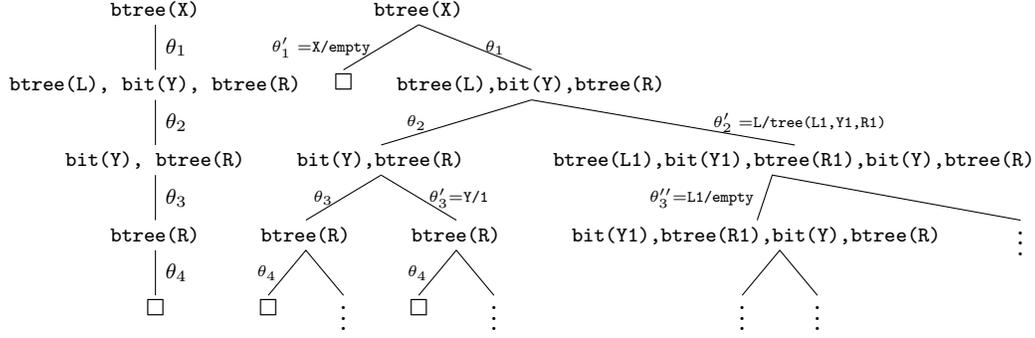
\begin{figure}
 \centering
 
 \begin{tikzpicture}
 \draw (-5,0) node {{\scriptsize \verb"btree(X)"}};
 \draw[] (-5,-0.2) -- node[anchor=west]{{\small $\theta_1$}} (-5,-.8);
 \draw (-5,-1) node {{\scriptsize \verb"btree(L), bit(Y), btree(R)"}};
  \draw[] (-5,-1.2) -- node[anchor=west]{{\small $\theta_2$}} (-5,-1.8);
 \draw (-5,-2) node {{\scriptsize \verb"bit(Y), btree(R)"}};
   \draw[] (-5,-2.2) -- node[anchor=west]{{\small $\theta_3$}} (-5,-2.8);
 \draw (-5,-3) node {{\scriptsize \verb"btree(R)"}};
    \draw[] (-5,-3.2) -- node[anchor=west]{{\small $\theta_4$}} (-5,-3.8);
 \draw (-5,-4) node {$\Box$};

   \draw (-1.5,0) node {{\scriptsize \verb"btree(X)"}};
   \draw (-1.5,-.2) --node[anchor=east]{{\tiny $\theta'_1=$\texttt{X/empty}}} (-2.5,-.8);
   \draw (-2.5,-1) node {$\Box$};
   
   \draw  (-1.5,-.2) --node[anchor=west]{{\tiny $\theta_1$}} (0,-.8);
   \draw (0,-1) node {{\scriptsize \verb"btree(L),bit(Y),btree(R)"}};
   
   \draw (0,-1.2) --node[anchor=east]{{\tiny $\theta_2~~~$}} (-2,-1.8);
   \draw (-2,-2) node {{\scriptsize \verb"bit(Y),btree(R)"}};
   
   \draw (0,-1.2) --node[anchor=west]{{\tiny $~~~~~~\theta'_2=$\texttt{L/tree(L1,Y1,R1)}}} (3.5,-1.8);
   \draw (3.5,-2) node {{\scriptsize \verb"btree(L1),bit(Y1),btree(R1),bit(Y),btree(R)"}};
   
   \draw (-2,-2.2) --node[anchor=east]{{\tiny $\theta_3$}} (-3,-2.8);
   \draw (-3,-3) node {{\scriptsize \verb"btree(R)"}};
   
   \draw (-3.5,-3.8) -- node[anchor=east]{{\tiny $\theta_4$}} (-3,-3.2) -- (-2.5, -3.8);
   \draw (-3.5,-4) node {$\Box$};
   \draw (-2.5,-4) node {$\vdots$};
   
      \draw (-2,-2.2) --node[anchor=west]{{\tiny $\theta'_3$=\texttt{Y/1}}} (-1,-2.8);
   \draw (-1,-3) node {{\scriptsize \verb"btree(R)"}};
      \draw (-1.5,-3.8) --node[anchor=east]{{\tiny $\theta_4$}} (-1,-3.2) -- (-.5, -3.8);
   \draw (-1.5,-4) node {$\Box$};
   \draw (-.5,-4) node {$\vdots$};
   
    \draw (3,-2.8) --node[anchor=east]{{\tiny $\theta''_3$=\texttt{L1/empty}}} (3.2,-2.2) -- (6.5,-2.8);
   \draw (3,-3) node {{\scriptsize \verb"bit(Y1),btree(R1),bit(Y),btree(R)"}};
     \draw (6.5,-3) node {\vdots};
     
    \draw (2.8,-3.8) --(3.3,-3.2) -- (3.8,-3.8);
    \draw (2.8,-4) node{\vdots};
    \draw (3.8,-4) node{\vdots};
 \end{tikzpicture}
 \caption{\textbf{Left.} An SLD-derivation (also a refutation) for \texttt{BinaryTree} with goal \texttt{btree(X)}. The computed answer is given by the composition of 
 $\theta_1=$ \texttt{X/tree(L,Y,R)}, $\theta_2=$ \texttt{L/empty}, $\theta_3=$ \texttt{Y/0}, $\theta_4=$ \texttt{R/empty}.
 \textbf{Right.} Different choices for SLD-derivations for the goal \texttt{btree(X)} selecting the left-most atom in a goal using a depth-first search 
 strategy.}\label{fig:sld-derivation}
\end{figure}
\end{example}

If we  pursue all possible selected atoms simultaneously instead of selecting one atom at a time, we will have an and-parallel implementation of the SLD-resolution. 
If we first pursue all possible clauses that unify with the given selected atom, we will have an or-parallel implementation. 
Pursuing both simultaneously gives and-or parallelism, see Figure \ref{fig:treeg}
and \cite{GuptaC94,PontelliG95,GPACH12}.


\begin{example}[\texttt{BinaryTree}]\label{ex:binarytree1}
{\normalsize 
The query {\verb"?- bit(X)"} can be solved simultaneously with {\verb"bit(0)"} and {\verb"bit(1)"} using or-parallelism. 
For the query { \verb"?- btree(tree(L,X,R))"}, an and-parallel algorithm can search for derivations for { \verb"btree(L)"}, {\verb"bit(X)"} 
and { \verb"btree(R)"} simultaneously.
}

\end{example}

\section{Parallel Derivations in Ground LP and Datalog}\label{sec:par}

We first discuss parallel derivation strategies in the ground  case.
Consider the ground re-formulation of the program \texttt{BinaryTree}.

\begin{example}[\texttt{BinaryTree - Ground Case}]\label{ex:binarytree2}
This ground logic program (let us call it \textbf{BTG}) defines a subset of the set of binary trees 
presented in Example \ref{ex:binarytree}.

\begin{verbatim}
 1. bit(0). 
 2. bit(1).
 3. btree(empty).
 4. btree(tree(empty,0,empty)) :- btree(empty), bit(0), btree(empty).
 5. btree(tree(empty,1,empty)) :- btree(empty), bit(1), btree(empty).
\end{verbatim}

\end{example}

\subsection{Coalgebraic Semantics for Derivations in LP}\label{sec:com}

Given a set $At$ of propositions (atoms), \cite{KPS12-2} shows that there is a bijection between the set of
variable-free logic programs over $At$ and the set of
$P_fP_f$-coalgebra structures on $At$, i.e., functions
$p:At\longrightarrow P_fP_f(At)$, where $P_f$ is the finite powerset
functor: each atom of a logic program $P$ is the head of finitely many
clauses, and the body of each of those clauses contains
finitely many atoms. 

The endofunctor $P_fP_f$ necessarily has a cofree comonad $C(P_fP_f)$ on
it.
It has been noticed in \cite{KMP10}, that, if a logic program can be modelled by a $P_fP_f$-coalgebra, then the SLD-derivations may be modelled
by a comonad $C(P_fP_f)$ on this coalgebra. 
The main result of~\cite{KMP10} established that, if $C(P_fP_f)$ is the
cofree comonad on $P_fP_f$, then, given a ground (variable-free) logic program $P$, the
induced $C(P_fP_f)$-coalgebra structure characterises the parallel and-or
derivation trees (cf. \cite{GuptaC94}) of $P$.
Here, we remind this construction, with a view of translating it first into a derivation algorithm, and then into implementation in Go.

\begin{example}\label{ex:pfpf}
Consider the logic program \textbf{BTG} from Example \ref{ex:binarytree2}.
The program has five atoms, namely \texttt{bit(0)}, \texttt{bit(1)}, \texttt{btree(empty)}, \texttt{btree(tree(empty,0,empty))} and \texttt{btree(tree(empty,1,empty))}. 
So $At_{BTG} = \{ \texttt{bit(0)}, \texttt{bit(1)}, \texttt{btree(empty)}, \texttt{btree(tree(empty,0,empty))},$\\
$ \texttt{btree(tree(empty,1,empty))}\}$. 
And the program can be identified with the $P_fP_f$-coalgebra structure on $At_{BTG}$ given by
$p(\texttt{bit(0)}) = \{ \{\empty\}\}$, $p(\texttt{bit(1)}) = \{ \{\empty\}\}$, $p(\texttt{btree(empty)}) = \{ \{\empty\}\}$ 
--- where $\{\empty \}$ is the empty set, and $ \{ \{\empty\}\}$, is the one element set consisting of the empty set ---
$p(\texttt{btree(tree(empty,0,empty)}) = \{\{\texttt{btree(empty), bit(0), btree(empty)}\}\}$,\\
$p(\texttt{btree(tree(empty,1,empty)}) = \{\{\texttt{btree(empty), bit(1), btree(empty)}\}\}$.
\end{example}

  Let $C(P_fP_f)$ denote the cofree comonad on $P_fP_f$. For any set
  $At$, $C(P_fP_f)(\At)$ is the limit of a diagram of the form
$$\ldots \longrightarrow \At \times P_fP_f(\At
\times P_fP_f(\At)) \longrightarrow \At \times P_fP_f(\At)
\longrightarrow \At.$$ Given $p:\At \longrightarrow P_fP_f(\At)$, put
$\At_0 = \At$ and $\At_{n+1} = \At \times P_fP_f(\At_n)$, and consider
the cone defined inductively as follows:
\begin{eqnarray*}
  p_0 & = & id: \At \longrightarrow \At\; ( = \At_0)\\
  p_{n+1} & = & \langle id, P_fP_f(p_n) \circ p \rangle : \At
  \longrightarrow \At \times P_fP_f (\At_n)\; ( = \At_{n+1})
\end{eqnarray*}
The limiting property determines the coalgebra
$\overline{p}: \At \longrightarrow C(P_fP_f)(\At)$.

\begin{example}\label{ex:comonad}
Continuing the previous example,
$\bar{p}(\texttt{btree(tree(empty,0,empty)}) )  =  p_2(\texttt{btree(tree(empty,0,empty)})) =  $ 
$ \langle \texttt{btree(tree(empty,0,empty))} \times \{ \{ \langle \texttt{btree(empty)} \times \{ \{ \empty \} \}\rangle , \langle \texttt{bit(0)} \times \{ \{ \empty\} \} \rangle , \langle \texttt{btree(empty)} \times \{ \{\empty\}\} \rangle \}\}\rangle $.
This construction could be graphically represented as a tree, see Figure~\ref{fig:comonad}.
If we think that every node of that tree is computed simultaneously and independently of the others, we may also say that 
Figure~\ref{fig:comonad} shows 
the and-or parallel SLD-refutation for the goal \texttt{btree(tree(empty,0,empty))}.


\begin{figure}
\begin{center}

\begin{tikzpicture}[level 1/.style={sibling distance=50mm},
level 2/.style={sibling distance=15mm},
level 3/.style={sibling distance=5mm},scale=.8,font=\tiny]
  
  \node (root) {\texttt{btree(tree(empty,0,empty))}} [level distance=6mm]
  child { [fill] circle (2pt)
             child { node {\texttt{btree(empty)}} 
                     child { [fill] circle (2pt)
                             child { node {$\Box$} }
                           }
                   }
             child { node {\texttt{bit(0)}} 
                     child { [fill] circle (2pt)
                             child { node {$\Box$} }
                           }
                   }
             child { node {\texttt{btree(empty)}} 
                    child { [fill] circle (2pt)
                             child { node {$\Box$} }
                           }
                   }
  }
     ;

%
%

  \end{tikzpicture}
\end{center}
\caption{\footnotesize{Graphical presentation of the action of $\overline{p}: \At \longrightarrow C(P_fP_f)(\At)$ on \texttt{btree(tree(empty,0,empty)}; this tree is also 
the and-or parallel derivation tree for \texttt{btree(tree(empty,0,empty)}.}}
\label{fig:comonad} 
\end{figure}

\end{example}

In Figure~\ref{fig:comonad}, the nodes alternate
between those labelled by atoms and those labelled by bullets
($\bullet$). Bullets correspond to the number of sets contained in the outer set. 
In Example~\ref{ex:comonad}, the big outer set contains one set with three elements, hence the tree root in 
Figure~\ref{fig:comonad} has one $\bullet$-node child, followed by further three children nodes. 
We use the traditional notation $\Box$ to denote
$\{\empty\}$. 

\subsection{From Semantics to Derivation Algorithm}

The following definition, first formulated in \cite{KMP10}, is CoALP's interpretation of and-or parallel derivations arising in Logic Programming, cf.~\cite{GPACH12}. 


\begin{definition}\label{df:andortree}
  Let $P$ be a ground logic program and let $G= \gets A$ be an atomic goal
  (possibly with variables). The \emph{and-or parallel derivation tree} for $A$ is the (possibly infinite) tree $T$ satisfying the
  following properties.
\begin{itemize}
\item $A$ is the root of $T$.
\item Each node in $T$ is either an and-node (an atom) or an or-node (given by $\bullet$).
\item For every node $A'$ occurring in $T$, if $A'$ is unifiable with exactly $m>0$ distinct clauses $C_1, \ldots , C_m$ in $P$ (a clause $C_i$ has the form $B_i
  \gets B^i_1, \ldots , B^i_{n_i}$, for some $n_i$) via mgu's $\theta_1,\ldots , \theta_m$, 
then $A'$ has exactly $m$ children given by or-nodes, such that, for every $i \in \{1, \ldots , m \}$, if $C_i  = B^i \gets B^i_1, \ldots ,B^i_n$, then the $i$th or-node has $n$ children 
given by and-nodes $B^i_1\theta_i, \ldots ,B^i_n\theta_i$.
\end{itemize}
\end{definition} 

\begin{example}
 An and-or tree corresponding to the \textbf{BTG} program with \verb"btree(X)" as goal is shown in Figure \ref{fig:treeg}.

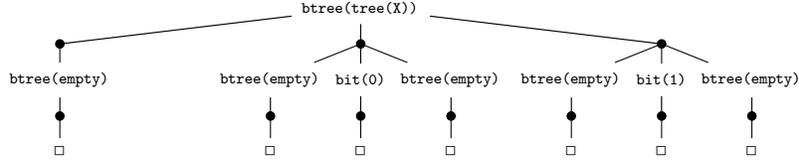
\begin{figure}
\begin{center}
\begin{tikzpicture}[level 1/.style={sibling distance=50mm},
level 2/.style={sibling distance=15mm},
level 3/.style={sibling distance=5mm},scale=.8,font=\tiny]
  
  \node (root) {\texttt{btree(tree(X))}} [level distance=6mm]
     child { [fill] circle (2pt)
             child { node {\texttt{btree(empty)}} 
                    child { [fill] circle (2pt)
                             child { node {$\Box$} }
                           }
                   }
           }
     child { [fill] circle (2pt)
             child { node {\texttt{btree(empty)}} 
                     child { [fill] circle (2pt)
                             child { node {$\Box$} }
                           }
                   }
             child { node {\texttt{bit(0)}} 
                     child { [fill] circle (2pt)
                             child { node {$\Box$} }
                           }
                   }
             child { node {\texttt{btree(empty)}} 
                    child { [fill] circle (2pt)
                             child { node {$\Box$} }
                           }
                   }
           }
     child { [fill] circle (2pt)
             child { node {\texttt{btree(empty)}} 
                     child { [fill] circle (2pt)
                             child { node {$\Box$} }
                           }
                   }
             child { node {\texttt{bit(1)}} 
                     child { [fill] circle (2pt)
                             child { node {$\Box$} }
                           }
                   }
             child { node {\texttt{btree(empty)}} 
                    child { [fill] circle (2pt)
                             child { node {$\Box$} }
                           }
                   }
           }
   ;

  \end{tikzpicture}
\end{center}
\caption{\footnotesize{The and-or parallel derivation tree for $\mathtt{btree(X)}$ for the  \texttt{BTG} program.}}
\label{fig:treeg} 
\end{figure}

\end{example}

The following ``Constructive Completeness\rq\rq{} result is a re-formulation of the more general soundness and completeness results of \cite{KMP10,KPS12-2}. The below formulation serves our ultimate goal of tracing the inheritance of constructions from coalgebraic semantics to logic algorithm and data structures used in implementation.

\begin{theorem}[Constructive Completeness]\label{th:CC}
Let $P$ be a ground logic program, and $G$ be a ground atomic goal. Given the construction of $\overline{p}(G)$,
there exists (can be constructed) an and-or tree $T_G$ for $G$, such that:
\begin{itemize}
\item (Tree depth 0.) The root of $T_G$ is given by $p_0(G) = G$.
\item (Tree depth $n$, for odd $n$.) Every node $A$ appearing at the tree depth $n-1$ has $m$ $\bullet$-child-nodes at the tree depth $n$, corresponding to the number of sets contained in the set $p(A)$.
\item (Tree depth $n$, for even $n>0$.) Every $i$th $\bullet$-node at the depth $n-1$ with a parent node $A$ at the level $n-2$ has children at the depth $n$, given by the distinct elements of the $i$th set contained in the set $p(A)$.
\end{itemize}
Moreover,  $T_G$  has finite depth $2n$ (for some $n \in \mathbb{N}$)
   iff ${\bar p}(G) = p_n(G)$. 
 The $T_G$ is infinite iff  ${\bar p}(G)$ is given
   by the element of the limit $\lim_{\omega}(p_n)(At)$ of an infinite
   chain given by the construction of $C(P_fP_f)$ above.   
\end{theorem}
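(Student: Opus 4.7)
The plan is to proceed by induction on $n$, showing that the finite approximant $p_n(G) \in \At_n$ encodes precisely the truncation $T_G^{(n)}$ of the and-or tree to depth $2n$. The base case $n=0$ is immediate since $p_0 = id$ and $T_G^{(0)}$ is the single root $G$. For the inductive step, unfolding $p_{n+1}(G) = \langle G, P_fP_f(p_n)(p(G))\rangle$ exposes exactly the two-layer structure demanded by the theorem: the outer $P_f$ yields one $\bullet$-child per set in $p(G)$ (this is the odd-depth clause), the inner $P_f$ yields one atom-child of that $\bullet$-node per element of the corresponding set in $p(G)$ (the even-depth clause), and finally applying $p_n$ to each such atom produces, by the inductive hypothesis, its depth-$2n$ sub-tree. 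The comparison of $p(G)$ with the definition of an and-or derivation tree in Definition \ref{df:andortree} is immediate from the bijection (noted in Section \ref{sec:com}) between variable-free programs and $P_fP_f$-coalgebras: in the ground case a clause head $A$ is unifiable with $A'$ iff $A = A'$, so the $m$ distinct clauses for $A$ correspond to the $m$ sets in $p(A)$, and their bodies to the elements of those sets.

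Next I would assemble the finite approximants into a single tree. The projection maps $\pi_n : \At_{n+1} \to \At_n$ defining the limit diagram are, under the above encoding, exactly the tree-truncation maps that forget the bottom layer of $T_G^{(n+1)}$, so the family $\{T_G^{(n)}\}_n$ is compatible. Defining $T_G$ as the union (colimit) of this increasing chain of truncations gives a tree whose depth-$2n$ truncation is $T_G^{(n)}$, and the universal property of the limit then identifies $\overline{p}(G)$ with the element encoding $T_G$. This proves existence of $T_G$ satisfying the three bulleted clauses.

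For the moreover, observe that in the ground case the only way a branch of $T_G$ can terminate is via a unit clause, giving a $\Box$-leaf; equivalently, $p(A) = \{\emptyset\}$ forces no further atom-children. If $\overline{p}(G) = p_n(G)$ along the canonical inclusion, then $P_fP_f(p_{n-1}) \circ p = P_fP_f(p_n) \circ p$ applied to every atom reachable by depth $2n$, which forces all such atoms to have no proper descendants beyond depth $2n$, so $T_G$ has depth exactly $2n$. Conversely, a tree of depth $2n$ yields stabilisation of the chain from stage $n$ onward, while any infinite branch witnesses a non-trivial element of $\lim_\omega(p_n)(\At)$ that does not factor through any finite $p_n$.

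The main obstacle I anticipate is the bookkeeping around the word \emph{distinct} in the even-depth clause: the inner $P_f$ collapses repeated body atoms, whereas a traditional and-or tree preserves the multiset of body literals. I would address this either by taking ``distinct'' at face value (as the statement does, matching the coalgebra exactly) or by explicitly lifting to a multiset functor, as implicit in the refinements of \cite{KPS12-2}. A second minor subtlety is verifying that the universal map $\overline{p}$ is really the one induced by the cone $\{p_n\}$, but this is just the definition of $C(P_fP_f)$ recalled in Section \ref{sec:com}.
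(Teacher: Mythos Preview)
The paper does not actually supply a proof of Theorem~\ref{th:CC}; it introduces the statement as a constructive reformulation of the soundness and completeness results of \cite{KMP10,KPS12-2} and proceeds directly to the implementation subsection. So there is nothing to compare your argument against line by line.

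That said, your approach is the natural one and is correct. The inductive unwinding of $p_{n+1}(G) = \langle G, P_fP_f(p_n)(p(G))\rangle$ does exactly what is required: the outer $P_f$ contributes the $\bullet$-layer at odd depth, the inner $P_f$ contributes the atom-layer at the next even depth, and the induction hypothesis supplies the depth-$2n$ subtree below each such atom. Identifying the projection maps of the $\omega$-chain with depth-truncation, and then reading $\bar p(G)$ as the element of the limit encoding the colimit tree $T_G$, is precisely how the cofree-comonad construction of Section~\ref{sec:com} is meant to be unpacked. Your treatment of the ``moreover'' clause via stabilisation of the cone is also the right idea.

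Your caveat about the word \emph{distinct} is apt: the inner $P_f$ does collapse repeated body atoms, so Definition~\ref{df:andortree} has to be read modulo that collapse for the correspondence with the $P_fP_f$-coalgebra to be literal. The paper silently adopts this reading (and the same point resurfaces in the first-order Lemma~\ref{lem:CC}). Overall your sketch is more detailed than anything the paper provides for this theorem, and nothing in it is missing or mistaken.
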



\subsection{From Derivation Algorithm to Implementation}\label{subsec:and-or-g}

The above definition of the and-or-tree can give rise to an interpreter. 
CoALP's implementation in Go starts with the construction of template trees called \emph{clause-trees} -- they are generated 
from the input program and are the building blocks that will be used for the construction of and-or parallel derivation trees.


\begin{definition}\label{def:clause-tree}

Let $P$ be a ground logic program and let $C=A \gets  B_1, \ldots, B_n$ be a clause in $P$. The \emph{clause-tree} for $C$ is the tree $T$ satisfying the following properties:
 \begin{itemize}
 \item $A$ is the root of $T$.
 \item Each node in $T$ is either an and-node (an atom) or an or-node (given by $\bullet$). 
 \item $A$'s child is given by an or-node. This or-node has $n$ children given by and-nodes $B_1,\ldots,B_n$.
 \item For every clause $C'$ occurring in $P$ and for every and-node $B_i$, if $B_i$ is unifiable with the head of $C'$, then the node $B_i$ contains a reference to the root of the clause-tree of $C'$. 
 \end{itemize}

\end{definition}

Note that the first three items in the definition of the clause-tree mimic very closely the action of $P_fP_f$-coalgebra $p$ on elements of $At$. Whereas the last item of the above definition 
paves the way for implementation of the $\bar{p}$ construction. The next example makes the connection clear.

\begin{example}
The clause-tree structures with open list references for the \texttt{BTG} program are shown in Figure \ref{pic:clausetree-ground}.
Compare with Examples~\ref{ex:pfpf} and \ref{ex:comonad}.

\begin{figure}
\begin{center}
\footnotesize{
  \begin{tikzpicture}[scale=0.2,font=\footnotesize,baseline=(current bounding box.north),grow=down,level distance=30mm]
      \node (btree0) {$\mathtt{btree(tree(empty,0,empty))}$}
        child {[fill,sibling distance=95mm] circle (4pt)
            child { node (btree0L) {$\mathtt{btree(empty)}$} }
            child { node (btree0X) {$\mathtt{bit(0)}$} }
            child { node (btree0R) {$\mathtt{btree(empty)}$} } };
      \node [right of=btree0,node distance=4cm] (btreeE) {$\mathtt{btree(empty)}$}
        child {[fill] circle (4pt)
            child { node {\scriptsize $\Box$} } };
      \node [below of=btree0,node distance=2cm] (bit0) {$\mathtt{bit(0)}$}
         child {[fill,sibling distance=90mm] circle (4pt)
         child { node {\scriptsize $\Box$} } };
      \node [right of=btreeE,node distance=4cm] (btree1) {$\mathtt{btree(tree(empty,1,empty))}$}
                child {[fill,sibling distance=95mm] circle (4pt)
                    child { node (btree1L) {$\mathtt{btree(empty)}$} }
                    child { node (btree1X) {$\mathtt{bit(1)}$} }
                    child { node (btree1R) {$\mathtt{btree(empty)}$} } };
      \node [below of=btree1,node distance=2cm] (bit1) {$\mathtt{bit(1)}$}
        child {[fill,sibling distance=90mm] circle (4pt)
            child { node {\scriptsize $\Box$} } };
      
      \draw[->,dashed,shorten >=2pt]
          (btree0X) to[out=270,in=90] (bit0);
      \draw[->,dashed,shorten >=2pt]
          (btree0L) to[out=270,in=270] (btreeE);
      \draw[->,dashed,shorten >=2pt]
          (btree0R) to[out=270,in=270] (btreeE);
      \draw[->,dashed,shorten >=2pt]
          (btree1X) to[out=270,in=90] (bit1);
      \draw[->,dashed,shorten >=2pt]
          (btree1L) to[out=270,in=270] (btreeE);
      \draw[->,dashed,shorten >=2pt]
          (btree1R) to[out=270,in=270] (btreeE);
     \end{tikzpicture}
}
\end{center}
\caption{\footnotesize{Clause-trees for the \texttt{BTG} program with dashed lines denoting references between trees.}}
\label{pic:clausetree-ground} 
\end{figure}
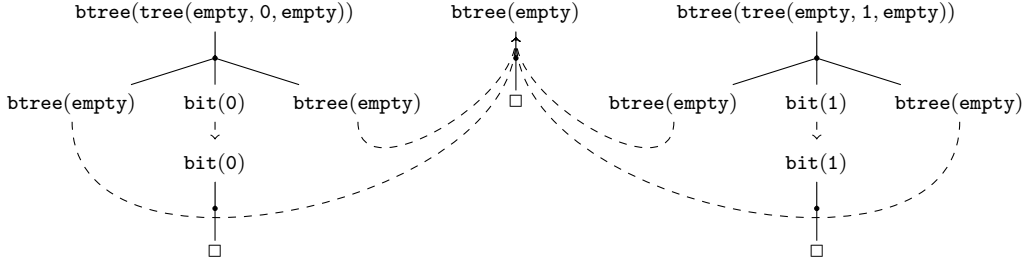

\end{example}


CoALP's implementation parses and transforms each program clause into a clause-tree when the program is loaded. We use two kinds of structures to encode clause-trees: 
and-nodes and or-nodes. The or-node structure consists of a list of pointers to and-nodes. The and-node structure consists of a clause head, together with a list of or-nodes, 
and a list to track references to clause-tree root nodes called \emph{open list}. Open lists play an important role in our implementation: they are used in a lazy fashion
to add new or-nodes to the or-node list in the future. 
Or-nodes and and-nodes are linked by pointers and are allocated dynamically, see Figure \ref{pic:clausetree-ground}. 

The construction of clause-trees from a given program consists of two steps. In the first one, the clauses of the program are transformed into clause-trees with
an empty list of references in the and-nodes. After the transformation pass, each and-node corresponding to a clause body atom is visited again and its open list is 
populated with references to the 
unifiable clause-tree root nodes. 
This is a one time process at the initialisation and does not need to be done again for different queries.

\begin{construction}[Derivations by Clause trees]\label{cons:and-or}
Given a program $P$ and a goal atom $G=\gets A$, a clause-tree derivation proceeds  to construct the tree $T$, as follows:

\begin{enumerate}
 \item A root $A$ for $T$ is created as an and-node containing the goal atom.
 \item The open list of the root $A$ is constructed by adding references to all clause-trees that have the same root atom.
 \item For each reference in an open list $O$ of a node $A'$ (where the corresponding atom equals the referenced root node's atom), a copy of the or-node below the referenced node and all its children in the clause-tree are added as a child to $A'$. The reference is then deleted from $O$.
 \item This process continues until all references in all the open lists in the tree $T$ have been processed.
\end{enumerate}
\end{construction}

\begin{example}\label{ex:ground-query-ct}
 
Given the query \texttt{btree(tree(empty,0,empty))} in the \textbf{BTG} program, we construct the and-or parallel tree as follows. We start with a tree only consisting of the goal 
atom as root and-node. The reference to the clause-tree with root \texttt{btree(tree(empty,0,empty))} is added to this
and-node open list as the respective atoms of the nodes are equal. Then, we start processing all nodes that have references to other clause-tree roots. 
We check whether the clause-tree root node that is referenced equals the currently processed nodes atom. If the equality has been verified, we can substitute the
node in our tree with a copy of the referenced clause-tree. In this case after the initial root node is expanded this process is done for the two leave nodes
\texttt{btree(empty)} and \texttt{bit(0)}. We continue this match and copy process until no nodes with references that match are left in our constructed tree.
For our query \texttt{btree(tree(empty,0,empty))}, the resulting tree will look like the and-or 
parallel tree depicted in Figure \ref{fig:comonad}, compare also with Example \ref{ex:comonad}. 

\end{example}

\begin{lemma}\label{lemma:and-or}
Let $P$ be a ground logic program and  $G$ be a ground atomic goal. Then, the and-or parallel derivation tree for $G$ 
is given by Construction~\ref{cons:and-or}.
\end{lemma}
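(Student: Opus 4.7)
The plan is to prove this by showing that the tree produced by Construction~\ref{cons:and-or} satisfies the two defining clauses of Definition~\ref{df:andortree}. I would proceed by structural induction on the tree depth, matching level by level the constructed tree against the specification, and treating the infinite case by arguing that the statement holds at every finite depth.

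For the base case, Step~1 of the construction places $A$ as the root and-node, which matches the first clause of Definition~\ref{df:andortree}. For the inductive step, suppose the two trees coincide down to depth $n$ and consider any and-node $A'$ at depth $n$. The crucial observation is that in the ground setting, unifiability of $A'$ with a clause head collapses to syntactic equality, so the set of clauses whose heads unify with $A'$ is precisely the set of clauses whose heads equal $A'$. By the construction of clause-trees (Definition~\ref{def:clause-tree}) together with Steps~2--3 of Construction~\ref{cons:and-or}, the open list of $A'$ is populated with references to exactly the clause-trees whose roots equal $A'$, and Step~3 substitutes each reference by a copy of the corresponding or-node child together with its and-node children. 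For a clause $C_i = B^i \gets B^i_1, \ldots, B^i_{n_i}$ with $B^i = A'$, this attaches one or-node child whose $n_i$ children are the and-nodes $B^i_1, \ldots, B^i_{n_i}$. Taking the union over all $m$ matching clauses yields precisely the $m$ or-nodes and their body-atom children required by the second clause of Definition~\ref{df:andortree} (with the trivial mgu $\theta_i = \varepsilon$, since everything is ground).

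The main obstacle is that the logic program may induce an infinite and-or tree, and Step~4 of the construction would then never complete its processing of all open lists. I would handle this by interpreting the construction coinductively, i.e.\ as producing a (possibly infinite) tree whose nodes at each depth are generated after finitely many iterations of Steps~2--3. Under this reading, the level-by-level induction above applies uniformly: for every finite $n$, only finitely many references need to be resolved to obtain the complete set of nodes at depth $n$, and these nodes agree with those required by Definition~\ref{df:andortree}. Since an and-or parallel derivation tree is determined by its nodes at every finite depth, this establishes the desired equality between the constructed tree and the and-or parallel derivation tree for $G$.

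A brief sanity check against Example~\ref{ex:ground-query-ct} confirms the argument: the root \texttt{btree(tree(empty,0,empty))} has exactly one matching clause-tree, whose or-node child copies in the three body atoms \texttt{btree(empty)}, \texttt{bit(0)}, \texttt{btree(empty)}; each of these in turn matches a single unit clause-tree and produces a $\Box$-leaf, reproducing the tree in Figure~\ref{fig:comonad}. This matches what Definition~\ref{df:andortree} prescribes and illustrates the general inductive argument above.
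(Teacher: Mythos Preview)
Your proof is correct and follows essentially the same approach as the paper's: both reduce ground unifiability to syntactic equality, match the roots, and then argue node-by-node that the or-node children and their and-node bodies coincide by tracing the open-list references from Definition~\ref{def:clause-tree} and Steps~2--3 of the construction. Your explicit induction on depth and your treatment of the infinite-tree case add a bit of extra care that the paper's own proof leaves implicit, but the core argument is the same.
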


Constructive Completeness of Theorem~\ref{th:CC} and the lemma above show the full chain starting from coalgebra and ending with implementation. It now remains to show that the resulting parallel language is indeed efficient.

Construction~\ref{cons:and-or} permits parallelisation since no variable  synchronization 
is required, and the order in which nodes of the tree are expanded is not relevant. Different expansion strategies ranging from sequential-depth and breadth-first up to fully parallel can be considered. This process in principle scales to the number of references to other clause-trees that the given tree has. 

During the construction of the parallel and-or tree, it has to be checked whether it contains a subset of branches that constitute
a proof for the query that is the root node of the tree. This work can be integrated into the expansion process and large parts can be done in parallel here as well. 

Note that implementing the above  restricted (ground) logic programs can have practical value of its own. Logic programs containing variables but no function symbols of arity $n>0$ can all be soundly translated into finitely-presented ground logic programs. The most famous example of such a language is Datalog~\cite{UllmanG88,Kanellakis88}.
The advantages of Datalog are easier implementations and a greater capacity for parallelisation. From the point of view of model theory, Datalog programs always have finite models. 

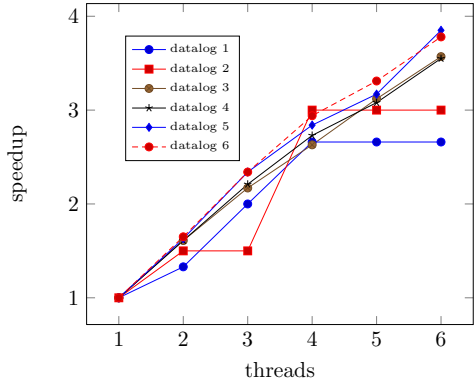
\begin{figure}
\centering
\begin{tikzpicture}[scale=.75]
	\begin{axis}[xlabel= threads,
		ylabel=speedup,legend style={at={(0.1,.9)},anchor=north west}]
	\addplot  coordinates {
		(1,1)
		(2,1.33)
		(3,2)
		(4,2.66)
		(5,2.66)
		(6,2.66)
	
	};
	\addplot coordinates {
		(1,1)
		(2,1.5)
		(3,1.5)
		(4,3)
		(5,3)
		(6,3)
	
	};
	\addplot coordinates {
		(1,1)
		(2,1.61)
		(3,2.17)
		(4,2.63)
		(5,3.12)
		(6,3.57)

	};
	\addplot  coordinates {
		(1,1)
		(2,1.61)
		(3,2.21)
		(4,2.73)
		(5,3.08)
		(6,3.55)
	
	};
	\addplot  coordinates {
		(1,1)
		(2,1.63)
		(3,2.34)
		(4,2.84)
		(5,3.17)
		(6,3.85)
	
	};
        \addplot  coordinates {
		(1,1)
		(2,1.65)
		(3,2.34)
		(4,2.94)
		(5,3.31)
		(6,3.78)
	
	};

	\legend{{\tiny datalog 1}, {\tiny datalog 2}, {\tiny datalog 3}, {\tiny datalog 4}, {\tiny datalog 5}, {\tiny datalog 6}}

	\end{axis}
\end{tikzpicture}
\caption{Speedup of Datalog programs, relative to the base case with 1 thread,  with different number of threads expanding the derivation tree.}\label{fig:datalog}
\end{figure}

Figure \ref{fig:datalog} shows the speedup that can be gained by constructing and-or parallel trees for Datalog programs in our system.
The Datalog programs are randomly generated and can be examined in~\cite{KPS13}. As can be seen in Figure~\ref{fig:datalog}, the speedup is
 significant and scales with the number of threads.

\subsection{DataLog Case Study: BTG}\label{subsec:dcsbtg}

 
Generally, given a query \texttt{btree(X)} and a ground variant of the \texttt{BinaryTree} program,  we take the matching ground instances of \texttt{X} to construct 
its subtrees (cf.~Figure~\ref{fig:treeg}).
 In the BTG fragment shown in Example \ref{ex:binarytree2}, these are
\texttt{btree(tree(empty,0,empty))},
\texttt{btree(tree(empty,1,empty))} and \texttt{btree(empty)}; but in some tests we describe here, there will be hundreds and even thousands of such instances. 
For each of these instances, parallel and-or trees can be constructed independently and in parallel. 
%
Our implementation provides several options to 
configure:

\begin{itemize}
 \item \emph{Number of threads}. This parameter indicates the number of threads that will be used in the general processing of the and-or parallel trees. 
 \item \emph{Parallel Expansion}. This option indicates that the expansion process will be run in parallel.
 \item \emph{Number of expansion threads}. This is the number of total additional threads that will be used to help expand the and-or trees in parallel.
\end{itemize}

In order to test our parallel implementation and emulate the real-life database growth in Datalog,  
we have increased the number of clauses of the
\texttt{BTG} program in two different ways. In Experiment 1, we use an algorithm (call it BTA) to generate hundreds of ground instances of clauses in \texttt{BinaryTree}, but making sure that the generated clauses describe balanced trees, i.e. having all the branches of the same depth.
In the second experiment, we do not impose this restriction, and use an algorithm (call it UTA) that generates ground programs describing hundreds of (balanced and unbalanced) binary trees;  we refer to the second kind of data as 
``unbalanced trees\rq\rq{}.
The algorithms BTA and UTA  
are given in Appendix~\ref{appendix1}.

There are various parameters to measure the success of a parallel language; we focus on three aspects, as follows.

\textbf{1. Program speedup with the increase of parallel threads;} or, in other words, \emph{given a program, would its parallel execution bring significant speedup?}\\
Figure~\ref{fig:speedup1}  shows the speedup when increasing the number of threads and expansion threads, for ten BTG Datalog programs, of different sizes and nature.  It also shows that the and-or derivation trees are generated faster when the \emph{parallel expansion} option 
is activated.
If the parallel expansion option is not used, increasing the number of threads does not significantly speed up the execution time 
(maximum speedup of $1.21$); on the contrary, using the parallel expansion option and increasing the number of expansion threads considerably speeds up the execution time
(maximum speedup of $4.13$).

\textbf{2. The gap between the best case and worst case of parallelisation:} or \emph{would any program be suitable for parallelisation?}

The BTA algorithm 
defines $2^{2^n-1}$ binary trees for a given depth $n$. Each of these trees will produce $2^{n+1}-1$ leaf nodes 
in the corresponding parallel and-or derivation tree, where $n$ is the depth of the tree.
The  UTA algorithm 
generates $3 \cdot 6^{2 \cdot n -3} - 3 \cdot 6^{2 \cdot n-5}$ trees of depth $n$.  
E.g., for depth $3$, UTA generates $630$ binary trees,
and BTA just $128$. This means that the programs created with the UTA algorithm contain smaller trees (regarding number of nodes), and 
this impacts the speedup, as it is shown in the right diagram of Figure~\ref{fig:speedup1}.



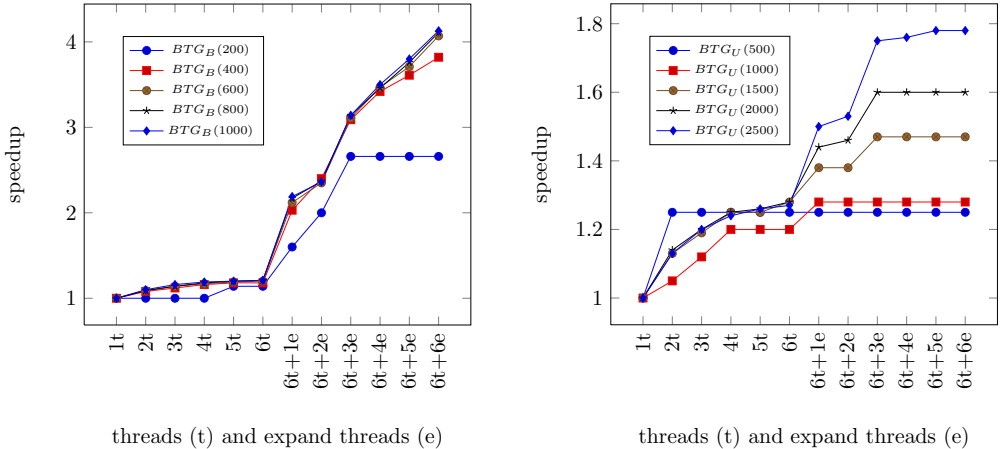
\begin{figure}
\centering
\begin{tikzpicture}

\draw (0,0) node{
\begin{tikzpicture}[scale=.75]
	\begin{axis}[xlabel= threads (t) and expand threads (e), xlabel style={at={(0.5,-.2)},anchor=north},
		ylabel=speedup,legend style={at={(0.1,.9)},anchor=north west},
		xticklabels={1t,2t,3t,4t,5t,6t,6t+1e,6t+2e,6t+3e,6t+4e,6t+5e,6t+6e},xtick={1,...,12}, x tick label style={rotate=90,anchor=east}]
	\addplot  coordinates {
		(1,1)
		(2,1)
		(3,1)
		(4,1)
		(5,1.14)
		(6,1.14)
		(7,1.6)
		(8,2)
		(9,2.66)
		(10,2.66)
		(11,2.66)
		(12,2.66)
	};
	\addplot coordinates {
		(1,1)
		(2,1.08)
		(3,1.12)
		(4,1.16)
		(5,1.18)
		(6,1.18)
		(7,2.03)
		(8,2.4)
		(9,3.09)
		(10,3.42)
		(11,3.61)
		(12,3.82)
	};
	\addplot coordinates {
		(1,1)
		(2,1.09)
		(3,1.14)
		(4,1.17)
		(5,1.19)
		(6,1.2)
		(7,2.12)
		(8,2.35)
		(9,3.11)
		(10,3.47)
		(11,3.71)
		(12,4.07)
				
	};
	\addplot  coordinates {
		(1,1)
		(2,1.09)
		(3,1.14)
		(4,1.18)
		(5,1.2)
		(6,1.21)
		(7,2.17)
		(8,2.37)
		(9,3.14)
		(10,3.46)
		(11,3.76)
		(12,4.11)
	};
	\addplot  coordinates {
		(1,1)
		(2,1.1)
		(3,1.16)
		(4,1.19)
		(5,1.2)
		(6,1.21)
		(7,2.19)
		(8,2.36)
		(9,3.14)
		(10,3.5)
		(11,3.8)
		(12,4.13)
	};

	\legend{{\tiny $BTG_B(200)$}, {\tiny $BTG_B(400)$}, {\tiny $BTG_B(600)$}, {\tiny $BTG_B(800)$},{\tiny $BTG_B(1000)$}}

	\end{axis}
\end{tikzpicture}};

\draw (7,0) node{
\begin{tikzpicture}[scale=.75]
	\begin{axis}[xlabel= threads (t) and expand threads (e), xlabel style={at={(0.5,-.2)},anchor=north},
		ylabel=speedup,legend style={at={(0.1,.9)},anchor=north west},
		xticklabels={1t,2t,3t,4t,5t,6t,6t+1e,6t+2e,6t+3e,6t+4e,6t+5e,6t+6e},xtick={1,...,12}, x tick label style={rotate=90,anchor=east}]
	\addplot  coordinates {
		(1,1)
		(2,1.25)
		(3,1.25)
		(4,1.25)
		(5,1.25)
		(6,1.25)
		(7,1.25)
		(8,1.25)
		(9,1.25)
		(10,1.25)
		(11,1.25)
		(12,1.25)
	};
	\addplot coordinates {
		(1,1)
		(2,1.05)
		(3,1.12)
		(4,1.2)
		(5,1.2)
		(6,1.2)
		(7,1.28)
		(8,1.28)
		(9,1.28)
		(10,1.28)
		(11,1.28)
		(12,1.28)
	};
	\addplot coordinates {
		(1,1)
		(2,1.13)
		(3,1.19)
		(4,1.25)
		(5,1.25)
		(6,1.28)
		(7,1.38)
		(8,1.38)
		(9,1.47)
		(10,1.47)
		(11,1.47)
		(12,1.47)
				
	};
	\addplot  coordinates {
		(1,1)
		(2,1.14)
		(3,1.2)
		(4,1.25)
		(5,1.26)
		(6,1.28)
		(7,1.44)
		(8,1.46)
		(9,1.6)
		(10,1.6)
		(11,1.6)
		(12,1.6)
	};
	\addplot  coordinates {
		(1,1)
		(2,1.13)
		(3,1.2)
		(4,1.24)
		(5,1.26)
		(6,1.27)
		(7,1.5)
		(8,1.53)
		(9,1.75)
		(10,1.76)
		(11,1.78)
		(12,1.78)
	};

	\legend{{\tiny $BTG_U(500)$}, {\tiny $BTG_U(1000)$}, {\tiny $BTG_U(1500)$}, {\tiny $BTG_U(2000)$},{\tiny $BTG_U(2500)$}}

	\end{axis}
\end{tikzpicture}};

\end{tikzpicture}

\caption{Speedup for ten different Datalog versions of the \texttt{BTG} program, with different parameters. 
\textbf{Left.} Speedup of programs generated with the \texttt{BTA} algorithm. 
\textbf{Right.} Speedup of programs generated with the \texttt{UTA} algorithm. 
The values $X$ of \texttt{BTG}$_B(X)$ and \texttt{BTG}$_U(X)$ indicate the number of clauses in the Datalog program.}\label{fig:speedup1}
\end{figure}

We notice several differences between the results obtained for programs generated with BTA and UTA algorithms.
First of all, the number of clauses 
of the programs is bigger in the UTA cases and the number of leaves is similar in both BTA and UTA cases (cf. Figure~\ref{fig:leaves}); 
however, the runtime are considerable smaller for the UTA programs. 
This is due to the fact that the binary trees described by unbalanced Datalog programs are smaller (there are fewer leaves but more trees), hence, their parallel and-or trees are smaller and the derivation of the trees is faster.

Another difference is the impact of increasing the number of threads (or expand threads): for UTA programs, the maximum speedup is $1.78$, which is not as good as for the BTA programs. 
The reason is again the size of the and-or trees. As the unbalanced trees are smaller, their 
creation is a small computational task and, therefore, the sequential overhead incurred by starting, syncing and distributing work among threads 
cannot be offset by working in parallel. 

Apart from the size, there is another reason which prevents the speedup when increasing the number of expand threads for the UTA programs.
In principle, keeping every expansion thread busy expanding a concrete part of a tree instead of directing it to work in different parts of the tree
results in the best speedup. An ideal implementation would 
adapt to the tree shape and size, and would 
dedicate new expansion threads only for computations of sufficiently large parts of the tree. 
However, in unbalanced trees, it cannot be known in advance if part of a tree 
is large enough to offset the setup costs of dedicating a new thread to it, instead of just executing the work in the current thread. As a result, 
when a new thread is dedicated to expand a part of a tree that is not big enough, the expansion process is slowed down and execution time increases.

\begin{figure}

\begin{tikzpicture}
\draw (0,0) node{

\begin{tikzpicture}[scale=.75]
	\begin{axis}[xlabel=number of leaves,
		ylabel=time,legend style={at={(0.1,.9)},anchor=north west}]
	\addplot  coordinates {
		(3960,8)
		(10160,65)
		(16360,212)
		(22560,444)
		(28760,760)
	};
	\addplot coordinates {
		(3960,7)
		(10160,55)
		(16360,178)
		(22560,366)
		(28760,625)
	};
	\addplot coordinates {
		(3960,5)
		(10160,32)
		(16360,100)
		(22560,204)
		(28760,346)
	};
	\addplot  coordinates {
		(3960,3)
		(10160,17)
		(16360,52)
		(22560,108)
		(28760,184)
	};
        \legend{{\tiny 1t}, {\tiny 6t}, {\tiny 6t+1e}, {\tiny 6t+6e}}

	\end{axis}

\end{tikzpicture}};

\draw (7,0) node{
\begin{tikzpicture}[scale=.75]
\begin{axis}[xlabel=number of leaves,
		ylabel=time,legend style={at={(0.1,.9)},anchor=north west}]
	\addplot  coordinates {
		(6656, 5)
		(13352, 18)
		(20590, 50)
		(28518, 104)
		(35248, 196)
	};
	\addplot  coordinates {
		(6656, 4)
		(13352, 15)
		(20590, 39)
		(28518, 81)
		(35248, 154)
	};
	\addplot  coordinates {
		(6656, 4)
		(13352, 14)
		(20590, 36)
		(28518, 72)
		(35248, 130)
	};
	\addplot  coordinates {
		(6656, 4)
		(13352, 14)
		(20590, 34)
		(28518, 65)
		(35248, 110)
	};
	
	 \legend{{\tiny 1t}, {\tiny 6t}, {\tiny 6t+1e}, {\tiny 6t+6e}}

	\end{axis}
\end{tikzpicture}};

\end{tikzpicture}

\caption{
Effect of parallelism on growing Datalog programs. 
\textbf{Left.} Time (in seconds) for 
the first benchmark (balanced trees). \textbf{Right.} Time (in seconds) for the second benchmark (unbalanced trees).}\label{fig:leaves} 
\end{figure}
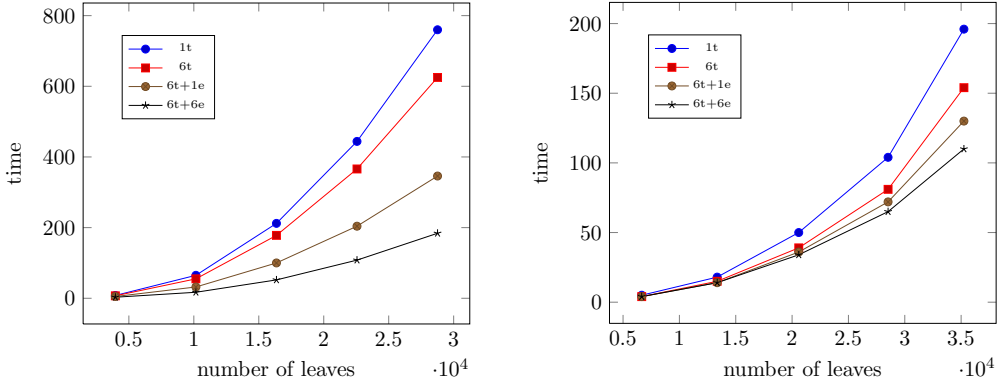

\textbf{3. Effect of data growth} or \emph{would parallelisation bring benefits when the database described by the program increases?}
Suppose one works with dynamically growing data and does not know in advance whether the database will eventually resemble a well-parallelisable 
case like BTA or a badly parallelisable case like UTA. Would CoALP do any good in the pessimistic scenario?  Figure~\ref{fig:speedup1} shows 
that the speedup improves with the growth of the database: UTA programs of size $500$ and $1000$ clauses  hardly allow any speedup, but with 
$2500$ clauses, the speedup is nearing $2$ times. Figure~\ref{fig:leaves} studies this effect under a different angle: assuming increase in
database and hence the number of and-or tree leaves, would parallel execution reduce the growth of execution time? -- and it does for both BTA and UTA experiments.

\section{Fibrational Semantics for Parallelism}\label{sec:cotrees}

We proceed to extend our coalgebraic approach to the general first-order case.
Unification and SLD-resolution algorithms are P-complete in the general case~\cite{DKM84,Kanellakis88}.
In practical terms, P-completeness of an algorithm means that its parallel implementation would not provide effective speedup. The problem  can be illustrated 
using the following example.

\begin{example}\label{ex:unsound}
 The sequential derivation for the goal { \verb"?- btree(tree(X,X,R))"} fails due to ill-typing. But, if the proof search proceeds in parallel fashion, 
 it may find substitutions for {\verb"X"} in distinct parallel and-branches of the derivation tree, see Figure~\ref{fig:treeu}. 
These substitutions will give an unsound result.

\end{example}

Therefore, implementations of \emph{parallel} SLD-derivations require keeping special records of previously made substitutions and, hence, involve additional
data structures and algorithms 
that coordinate variable substitution in different branches of parallel derivation trees \cite{GPACH12}. 
This reduces 
LP capacity for parallelisation.

The practical response to this problem (cf.~\cite{GPACH12}), has been to distinguish cases where parallel SLD-derivations can be sound without synchronisation, 
and achieve efficient parallelisation there. 
There are two kinds of and-parallelism: \emph{independent} and-parallelism and  \emph{dependent} and-parallelism. The former
arises when, given two or more subgoals, there is no common variable in these goals (cf.~Example~\ref{ex:binarytree}). On the contrary, \emph{dependent} and-parallelism 
appears when two or more subgoals have a common variable and compete in the creation of bindings for such a variable (cf. Example~\ref{ex:unsound}). Dependent and-parallelism 
can produce unsound derivations due to variable dependencies; hence,  and-parallel implementations have to sacrifice parallelism for soundness and synchronise dependent variables. 
A different approach is pursued by CoALP~\cite{KP11,KP11-2} below.

\begin{figure}

\begin{minipage}{.45\textwidth}
 
 \begin{center}
  
\begin{tikzpicture}[level 1/.style={sibling distance=15mm},
level 2/.style={sibling distance=30mm},
level 3/.style={sibling distance=5mm},scale=.8,font=\footnotesize]
  
  \node (root) {\texttt{btree(tree(X,X,R))}} [level distance=6mm]
     child { [fill] circle (2pt)
             child { node {\texttt{btree(X)}} 
                     child { [fill] circle (2pt)
                             child { node {$\Box$} }
                           }
                   }
             child { node {\texttt{bit(X)}} 
                     child { [fill] circle (2pt)
                             child { node {$\Box$} }
                           }
                   }
             child { node {\texttt{btree(R)}} 
                    child { [fill] circle (2pt)
                             child { node {$\Box$} }
                           }
                   }
           }
           
   ;
\end{tikzpicture}

 \end{center}

\end{minipage}
\begin{minipage}{.55\textwidth}
\begin{center}

\begin{tikzpicture}[level 1/.style={sibling distance=15mm},
level 2/.style={sibling distance=30mm},
level 3/.style={sibling distance=5mm},scale=.8,font=\footnotesize]
  
  \node (root) {\texttt{btree(tree(X,X,R))}} [level distance=6mm]
     child { [fill] circle (2pt)
             child { node {\texttt{btree(X)}}}
             child { node {\texttt{bit(X)}} }
             child { node {\texttt{btree(R)}}}
           }
           
   ;
\end{tikzpicture}
\end{center}

\end{minipage}

\caption{\footnotesize{\textbf{Left:} The naive first-order parallelisation leading to unsound substitutions, if two branches of the tree are allowed to substitute for 
$X$ independently and in parallel. \textbf{Right}: the action of $\bar{p}$ on the goal \texttt{btree(tree(X,X,R))}; also, a coinductive tree for \texttt{btree(tree(X,X,R))}.}}
\label{fig:treeu} 
\end{figure}
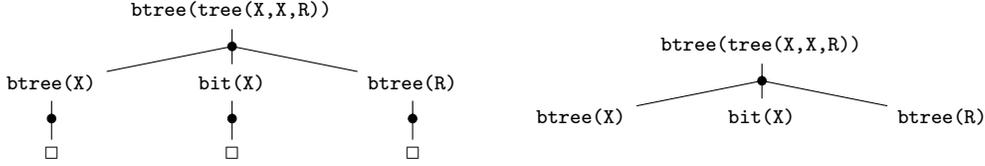

\subsection{Coalgebraic Semantics for First-order Parallel Derivations}\label{sec:fib}

We briefly recall the constructions involved in the coalgebraic semantics for first-order logic programs. We then translate them into derivation algorithms and implementation,
following the same scheme as in the previous section. 

Following the standard practice, we model the
first-order language underlying a logic program by a Lawvere
theory~\cite{AmatoLM09,BonchiM09,BruniMR01}.

\begin{definition}\label{Law}
  Given a signature $\Sigma$ of function symbols, the {\em Lawvere
    theory} $\mathcal{L}_{\Sigma}$ generated by $\Sigma$ is the
  following category: $\texttt{ob}(\mathcal{L}_{\Sigma})$ is the set
  of natural numbers.  For each natural number $n$, let $x_1,\ldots
  ,x_n$ be a specified list of distinct variables. Define
  $\texttt{ob}(\mathcal{L}_{\Sigma})(n,m)$ to be the set of $m$-tuples
  $(t_1,\ldots ,t_m)$ of terms generated by the function symbols in
  $\Sigma$ and variables $x_1,\ldots ,x_n$. Define composition in
  $\mathcal{L}_{\Sigma}$ by substitution.
%
\end{definition}

\begin{example}\label{ex:arrows}
  Consider \texttt{BinaryTree}.  
  The constants \texttt{O}, \texttt{1} and \texttt{empty} are modelled by maps
  from $0$ to $1$ in $\mathcal{L}_{\Sigma}$,  and \texttt{tree} is modelled by a map from
  $3$ to $1$. The term \texttt{tree(0,0,empty)} is therefore modelled by the map
  from $0$ to $1$ given by the composite of the maps modelling
  \texttt{tree}, \texttt{0} and \texttt{empty}.
	\end{example}

For each signature $\Sigma$, we extend the set
$At$ of atoms for a ground logic program to the functor $At:\ls^{op}
\rightarrow Set$ that sends a natural number $n$ to the set of all
atomic formulae generated by $\Sigma$, variables among a fixed set
$x_1,\ldots ,x_n$, and
the predicate symbols appearing in the logic program. A map $f:n
\rightarrow m$ in $\ls$ is sent to the function $At(f):At(m)
\rightarrow At(n)$ that sends an atomic formula $A(x_1, \ldots,x_m)$
to $A(f_1(x_1, \ldots ,x_n)/x_1, \ldots ,f_m(x_1, \ldots ,x_n)/x_m)$,
i.e., $At(f)$ is defined by substitution.

\begin{example}\label{ex:At}
For \texttt{BinaryTree}, $At(3)$ is a poset containing:
 \texttt{bit(0),} \texttt{bit(1),} \texttt{btree(empty),} \texttt{btree(tree(L,X,R)),} \texttt{btree(L),} \texttt{bit(X),} \texttt{btree(tree(0,0,0)),}
 \texttt{btree(tree(1,1,1)),} \texttt{btree(tree(0,empty,0)),} \ldots, \texttt{bit(tree(0,0,0)),} \texttt{bit(tree(1,1,1)),} \texttt{bit(tree(0,empty,0)),}  \ldots, 
 \texttt{btree(tree(tree(L,X,R)),X,tree(L,X,R)),}  \texttt{btree(tree(tree(X,X,X)),X,tree(X,X,X)),} \texttt{...} .
Due to the presence of the function symbol \texttt{tree} that can be composed recursively,  it is an infinite set. Notice the restriction on the number of distinct
variables in the fibre of $3$.
\end{example}

Given a logic program $P$ with function symbols in $\Sigma$, \cite{KP11,KPS12-2} model $P$ by the $Lax(\ls^{op},Poset)$-coalgebra,
 whose $n$-component takes an atomic formula $A(x_1,\ldots ,x_n)$ with at
most $n$ variables, considers all substitutions of clauses in $P$
whose head \emph{agrees} with $A(x_1,\ldots ,x_n)$, and gives the set of sets
of atomic formulae in antecedents. 
We say a head $H$ (from a clause $H \gets body$) agrees with $A(x_1,\ldots ,x_n)$
if the following conditions hold:

\begin{enumerate}
	\item $H\theta = A(x_1,...,x_n)$ 

\item applying $\theta$ to $body$ yields formulae all of whose variables must be among $x_1, \ldots ,x_n$.
\end{enumerate}

The conditions $(i,ii)$ above deserve careful discussion, as they have implications on implementation of CoALP. They are necessary and sufficient conditions to insure that 
the fibrational discipline is obeyed in the coalgebraic model.  
Note that condition $(i)$ resembles very much the definition of term-matching (cf.~Section~\ref{sec:backgr}). The asymmetric application of the substitution $\theta$ is 
not the only feature that distinguishes the above items from most general unifiers (mgus): item $(i)$ does not require $\theta$ to be \lq\lq{}most general\rq\rq{}, which 
also distinguishes $(i)$ from term-matching.
Item $(ii)$ insures no new variables have been introduced within one fibre.

\begin{example}\label{ex:CA}
Continuing Example \ref{ex:At} for the program \texttt{BinaryTree}, for $At(0)$, the $Lax(\ls^{op},Poset)$-coalgebra will essentially give account to 
the version of \texttt{BTG} program in which all possible ground instances of  \texttt{BinaryTree} are present in the form of clauses:
e.g, 
for 
\texttt{btree(tree(empty,0,empty))} $\in At(0)$,  $p(0)(\texttt{btree(tree(empty,0,empty))})= \{\{\texttt{btree(empty), bit(0), btree(empty)}\}\}$.
For 
\texttt{btree(tree(tree(empty,0,empty),0,tree(empty,0,empty)))} $\in At(0)$,\\   
$p(0)(\texttt{btree(tree(tree(empty,0,empty),0,tree(empty,0,empty)))}) = \{\{\texttt{btree(tree(empty,0,empty)), bit(0), btree(tree(empty,0,empty))}\}\}$.

But note that the empty set will correspond to e.g. $p(1)(\texttt{bit(X)})$, as no clause in \texttt{BinaryTree} \emph{agrees} with it; cf. Item $(i)$.

\end{example}

The next example shows the effect of items $(i)-(ii)$ on the derivations. 

\begin{example}\label{ex:TQ}
Consider the following program (we call it TQ):
\begin{verbatim}
 1. T(X,c) :- Q(X).
 2. Q(X) :- P(X).
 3. Q(a).
 4. P(b) :- P(X).
\end{verbatim}
%
%

For $Q(a) \in At(0)$, 
$p(0)(Q(a))= \{\{P(a)\}, \{\} \}$, the last set $\{\}$ models the third clause.
However, for $Q(X) \in At(1)$, the corresponding set will be
$p(1)(Q(X)) = \{\{P(X)\}\}$. Item $(ii)$ plays a role with clauses introducing a new variable in the body, like clause 4. For fiber of $0$, 
$p(0)(P(b)) = \{\{P(a)\},\{P(b)\},\{P(c)\}\}$, note that both the fact that $\theta$ is not required to be an mgu, and the item $(ii)$ play a role here.
See also Figure~\ref{fig:TQ}, the right-hand side.

\end{example}

\begin{figure}
 \begin{center}
 \begin{tikzpicture}[level 1/.style={sibling distance=15mm},
level 2/.style={sibling distance=30mm},
level 3/.style={sibling distance=5mm},scale=.8,font=\footnotesize]
  
  \node (root) {\texttt{T(X,c)}} [level distance=10mm]
             child { node {\texttt{Q(X)}} 
               child { node {\texttt{P(X)}} 
							      child {node {\texttt{P(b)}}
										   child {node {\texttt{P(b)}}
											     child  { node {$\vdots$}}}}}
               child { node {$\Box$} }}  ;
\end{tikzpicture}
 \begin{tikzpicture}[level 1/.style={sibling distance=15mm},
level 2/.style={sibling distance=30mm},
level 3/.style={sibling distance=5mm},scale=.8,font=\footnotesize]
  
  \node (root) {\texttt{T(X,c)}} [level distance=6mm]
	child { [fill] circle (2pt)
             child { node {\texttt{Q(X)}}
						child { [fill] circle (2pt)
               child { node {\texttt{P(X)}} 
							    }}}};
\end{tikzpicture}
 \begin{tikzpicture}[level 1/.style={sibling distance=15mm},
level 2/.style={sibling distance=20mm},
level 3/.style={sibling distance=20mm},scale=.8,font=\footnotesize]
  
  \node (root) {\texttt{T(a,c)}} [level distance=6mm]
	child { [fill] circle (2pt)
             child { node {\texttt{Q(a)}} 
						child { [fill] circle (2pt)
						   child { node {\texttt{P(a)}}} }
							child { [fill] circle (2pt)
               child { node {$\Box$} 
							    }}}};
\end{tikzpicture}
 \begin{tikzpicture}[level 1/.style={sibling distance=15mm},
level 2/.style={sibling distance=20mm},
level 3/.style={sibling distance=20mm},scale=.8,font=\footnotesize]
  
  \node (root) {\texttt{T(b,c)}} [level distance=6mm]
	child { [fill] circle (2pt)
             child { node {\texttt{Q(b)}} 
						child { [fill] circle (2pt)
						   child { node {\texttt{P(b)}}
							child { [fill] circle (2pt)
               child { node {\texttt{P(X)}} 
							    }}}}}};
\end{tikzpicture}
 \begin{tikzpicture}[level 1/.style={sibling distance=15mm},
level 2/.style={sibling distance=20mm},
level 3/.style={sibling distance=20mm},scale=.8,font=\footnotesize]
  
  \node (root) {\texttt{T(b,c)}} [level distance=6mm]
	child { [fill] circle (2pt)
             child { node {\texttt{Q(b)}} 
						child { [fill] circle (2pt)
						   child { node {\texttt{P(b)}}
							child { [fill] circle (2pt)
               child { node {\texttt{P(a)}} 
							    }}
							    	child { [fill] circle (2pt)
               child { node {\texttt{P(b)}} 
							    		child { [fill] circle (2pt)
               child { node {\texttt{P(a)}} 
							    }}
							    	child { [fill] circle (2pt)
               child { node {\texttt{P(b)}}     child { node {$\vdots$} }
							    							    }}
							    	child { [fill] circle (2pt)
               child { node {\texttt{P(c)}} 
							    }}
							    }}
							    	child { [fill] circle (2pt)
               child { node {\texttt{P(c)}} 
							    }}
							    							    }}}};
\end{tikzpicture}
 \end{center}
\caption{\footnotesize{\textbf{Far Left}. The SLD-tree for TQ and the goal \texttt{T(X,c)}. \textbf{Centre.} The coinductive trees for TQ and the goals \texttt{T(X,c)},
\texttt{T(a,c)},  \texttt{T(b,c)}. For \texttt{T(X,c)} and \texttt{T(a,c)} the coinductive trees correspond to the  action of $\bar{p}(1)(T(X,c))$ and $\bar{p}(0)(T(a,c))$. 
\textbf{Far Right.} Action of $\bar{p}(0)(\texttt{T(b,c)})$.  }} 
\label{fig:TQ} 
\end{figure}
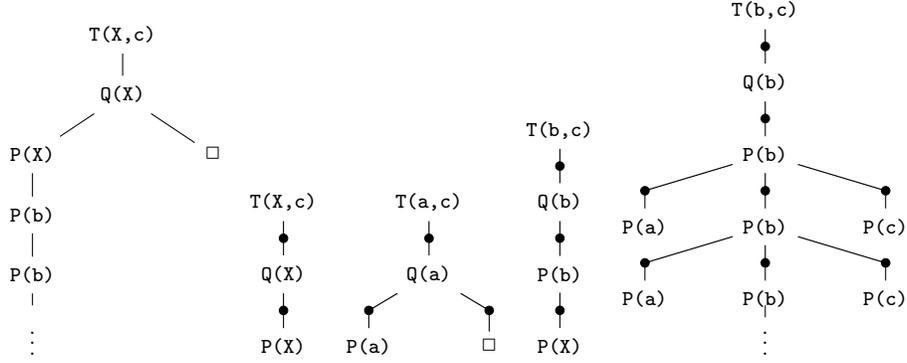


Similarly to the ground case, this can be extended to model derivations. A lax natural transformation ${\bar p}:At
\longrightarrow C(P_cP_f)At$, when evaluated at $n$, is the function
from the set $At(n)$ to the set of all possible derivations in a fibre $n$. Note the $P_c$ -- a countable powerset functor -- is introduced to model countable signature, 
as will be further explained below.
The role of laxness is explained in detail in \cite{KP11,KPS12-2} and critiqued in \cite{BonchiZ13}, so we will not focus on this issue here.
Instead, we will concentrate on the role of the \emph{fibrational semantics} 
in the design of parallel derivation algorithm. 

\begin{example}\label{ex:treefo}
  Consider \texttt{BinaryTree} as in Example~\ref{ex:binarytree}. Suppose we
  start with an atomic formula
  \verb"btree(tree(X,X,R))" $\in At(2)$. Then ${\bar p}($\verb"btree(tree(X,X,R))"$)$ is the
  element of $C(P_cP_f)At(2)$ expressible by
    the tree on the right hand side of Figure~\ref{fig:treeu}.
Note that despite there being an infinite number of binary trees and the infinite number of instances of clauses of \texttt{Binary Tree}, the countability accounted
for by $P_c$ does not arise for programs of this kind, cf. Items $(i-ii)$.  

    This tree agrees partially with the and-or parallel
    derivation tree for $\mathtt{btree(tree(X,X,R))}$ given on the left of Figure~\ref{fig:treeu}. 
    But it has
    leaves \texttt{btree(X)}, \texttt{bit(X)} and \texttt{btree(R)} (cf. Item $(i)$),
    whereas the and-or parallel derivation tree follows those nodes,
    using substitutions determined by mgu's that might not be
    consistent with each other, e.g., there is no consistent
    substitution for \texttt{X}.

Action of  $\bar{p}(1)(\texttt{btree(tree(empty,empty,R))})$ and $\bar{p}(1)(\texttt{btree(tree(0,0,empty))})$,  are shown in Figure~\ref{fig:fiber}. 
Compare with Examples~\ref{ex:At} and~\ref{ex:CA}; again, note the effect of Item $(i)$.

\end{example} 

\begin{example}
The action of $\bar{p}(0)T(b,c)$ is given in Figure~\ref{fig:TQ}. Note the infinite depth. If there was a constructor in the language, e.g. \texttt{f},
then the branching could be infinite, bringing $\texttt{P(f(a))}, \texttt{P(f(f(a))}, \ldots , \texttt{P(f(b))}, \ldots $ in the fibre of $0$, alongside 
the branches with \texttt{P(a)}, \texttt{P(b)}, and \texttt{P(c)}. This is the effect of Item $(i)$ not requiring $\theta$ to be an mgu. Programs of this
kind require countability in $P_cP_f$.
\end{example}

\begin{figure}
 \begin{center}
 \begin{tikzpicture}[level 1/.style={sibling distance=15mm},
level 2/.style={sibling distance=30mm},
level 3/.style={sibling distance=5mm},scale=.8,font=\footnotesize]
  
  \node (root) {\texttt{btree(tree(empty,empty,R))}} [level distance=6mm]
     child { [fill] circle (2pt)
             child { node {\texttt{btree(empty)}} 
                     child { [fill] circle (2pt)
                             child { node {$\Box$} }
                           }
                   }
             child { node {\texttt{bit(empty)}}                   
                   }
             child { node {\texttt{btree(R)}} 
                                      }
           }  ;
\end{tikzpicture}
\begin{tikzpicture}[level 1/.style={sibling distance=15mm},
level 2/.style={sibling distance=30mm},
level 3/.style={sibling distance=5mm},scale=.8,font=\footnotesize]
  \node (root) {\texttt{btree(tree(0,0,empty))}} [level distance=6mm]
     child { [fill] circle (2pt)
             child { node {\texttt{btree(0)}} 
                   }
             child { node {\texttt{bit(0)}} 
                     child { [fill] circle (2pt)
                             child { node {$\Box$} }
                           }
                   }
             child { node {\texttt{btree(empty)}} 
                    child { [fill] circle (2pt)
                             child { node {$\Box$} }
                           }
                   }
           } ;
\end{tikzpicture}
 \end{center}
\caption{\footnotesize{Action of  $\bar{p}$(1) on \texttt{btree(tree(empty,empty,R))} and \texttt{btree(tree(0,0,empty))}; also, 
coinductive trees for \texttt{btree(tree(empty,empty,R))} and \texttt{btree(tree(0,0,empty))}.}}
\label{fig:fiber} 
\end{figure}

We define a clause to be \emph{regular}, if the set of variables appearing in its body is a proper subset of the variables appearing in its head. 
We define logic program to be a \emph{regular program}, if all of its clauses are regular; otherwise the program is \emph{irregular}.

\subsection{From Semantics to Derivation Algorithm}

The  above coalgebraic semantics suggests to restrict  \emph{unification} involved in e.g. SLD-derivations 
to \emph{term matching}. 
However, for constructive reasons, we do not follow the above coalgebraic semantics literally this time,
and do not lift restriction of the term-matcher to be the mgu. This permits to avoid infinite cycles when working with irregular programs, cf. Figure~\ref{fig:TQ}.
To compensate, we relax the restriction on body variables (cf. Item $(ii)$) and allow new variables to be introduced within the derivations.

\begin{definition}\label{df:coindt}
Let $P$ be a logic program and $G=\gets A$ be an atomic goal.
The \emph{coinductive tree} for $A$ is
  a possibly infinite tree $T$ satisfying the following properties.
\begin{itemize}
\item $A$ is the root of $T$.
\item Each node in $T$ is either an and-node (an atom) or an or-node (given by $\bullet$).
\item For every and-node $A'$ occurring in $T$, if there exist exactly $m>0$ 
distinct  clauses $C_1, \ldots , C_m$ in $P$ (a clause $C_i$ has the form $B_i
  \gets B^i_1, \ldots , B^i_{n_i}$, for some $n_i$),  such that $A' = B_1\theta_1 =
  ... = B_m\theta_m$, for mgus $\theta_1, \ldots , \theta_m$,  then $A'$ has exactly $m$ children given by
  or-nodes, such that, for every $i \in m$, the $i$th or-node has $n_i$
  children given by and-nodes $B^i_1\theta_i, \ldots ,B^i_{n_i}\theta_i$.
\end{itemize}
\end{definition}

\begin{example}\label{ex:CT}
Figures~\ref{fig:treeu}--\ref{fig:fiber} (excluding Figure~\ref{fig:TQ}) showing the action of $\bar{p}$ on program atoms, would correspond exactly to coinductive 
trees for these atoms. Note 
the difference between the coinductive trees and and-or parallel trees (Figure~\ref{fig:treeu}) and the SLD-derivations for \texttt{BinaryTree} (Figure~\ref{fig:sld-derivation}).
However, there will be cases 
when the coinductive trees and the corresponding action of $\bar{p}$ differ, cf. Figure~\ref{fig:TQ}. 
\end{example}


The seemingly small restriction of unification to term-matching changes the way the proof-search is handled within each coinductive tree. Note that unification in general is inherently
sequential, whereas term matching is parallelisable~\cite{DKM84}.
Term-matching permits implicit handling of both parallelism and corecursion:

\begin{itemize}
	\item due to term-matching, all existing variables are not instantiated and will remain the same within one tree, and therefore no explicit variable synchronisation
	is needed when (the branches of) trees are expanded 
	in parallel.
	\item term-matching permits to unfold coinductive trees lazily, keeping
each individual tree at a finite size, provided the program is
well-founded~\cite{KP11-2,KPS12-2}. Laziness in its turn plays a role in delaying substitutions in parallel derivations. 
\end{itemize}
 
\begin{example}\label{ex:TQs}
Consider the program TQ from Example~\ref{ex:TQ}. Figure~\ref{fig:TQ} shows the SLD-derivations 
and the coinductive tree for goal $T(X,c)$. The SLD-derivations produce one derivation that loops forever; and, subject to clause re-ordering or loop-termination, would also give
one answer computed on the second step of derivations.
The coinductive tree for the same goal stops lazily, and neither loops nor gives the answer.    
However, after a suitable substitution, the coinductive tree for \texttt{T(a,c)} gives the sought answer (we will use this technique for coinductive derivations in the next section).
The coinductive trees  for the goal atoms \texttt{T(X,c)} and \texttt{T(a,c)} in Figure~\ref{fig:TQ} correspond to the action of $\bar{p}$ on them. However, the case of \texttt{T(b,c)} 
is different.
Note the loop stopping lazily in the coinductive tree for \texttt{T(b,c)}, which distinguishes it from infinite SLD-derivations and the construction of $\bar{p}(0)(\texttt{T(b,c)})$. 

\end{example}  


In line with Section~\ref{sec:par}, we would like to establish a \emph{constructive completeness result}, showing how to transform our coalgebraic semantics into coinductive trees.
However, this is impossible;  Examples~\ref{ex:TQ} and \ref{ex:TQs}, and Figure~\ref{fig:TQ} show a counter-example.  In fact, in \cite{KP11-2,KPS12-2}, 
completeness theorem for CoALP gives a weak, rather than a strong (constructive), completeness statement.
We can, however, establish 
constructive completeness for regular LPs.

\begin{lemma}[Restricted Constructive Completeness]\label{lem:CC}
Let $P$ be a regular logic program, and $G$ be an atomic goal with exactly $s$ variables. Given the construction of $\overline{p}(s)(G)$,
there exists (can be constructed) a coinductive  tree $T_G$ for $G$, such that:
\begin{itemize}
\item (Tree depth 0.) The root of $T_G$ is given by $p_0(s)(G) = G$.
\item (Tree depth $n$, for odd $n$.) Every node $A$ appearing at the tree depth $n-1$ has $m$ $\bullet$-child-nodes at the tree depth $n$, corresponding to the number of sets 
contained in the set $p(s)(A)$.
\item (Tree depth $n$, for even $n>0$.) Every $i$th $\bullet$-node at the depth $n-1$ with a parent node $A$ at the depth $n-2$ has children at the depth $n$, given by
the distinct elements of the $i$th set in $p(s)(A)$.
\end{itemize}
Moreover,  $T_G$  has finite depth $2n$ (for some $n \in \mathcal{N}$)
   iff ${\bar p}(G) = p_n(s)(G)$. 
 The $T_G$ is infinite iff  ${\bar p}(s)(G)$ is given
   by the element of the limit $\lim_{\omega}(p_n)(s)(At)$ of an infinite
   chain given by Construction of $C(P_fP_f)$.   
\end{lemma}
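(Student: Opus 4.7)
The plan is to prove the lemma by induction on tree depth, showing at each stage that the coinductive tree construction of Definition~\ref{df:coindt} reproduces exactly the component $p_n(s)$ of the cone defining $\overline{p}$. The base case is immediate since $p_0(s) = id$ places $G$ at the root, matching the first bullet in the statement. For the inductive step, assume the tree agrees with $p_{n-1}(s)$ down to depth $2(n-1)$; I would show that expanding each atom $A$ appearing at depth $2(n-1)$ by one more coinductive-tree step produces precisely the data of $\langle id, P_fP_f(p_{n-1})\circ p\rangle(s)(A) = p_n(s)(A)$. The outer $P_f$ yields the bullet children at odd depth $2(n-1)+1$; the inner $P_f$ yields their atom-children at even depth $2n$.

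The key technical point, and the only place where regularity is really used, is the reconciliation of the fibrational conditions $(i)$ and $(ii)$ from Section~\ref{sec:fib} with the term-matching condition $A' = B_i\theta_i$ built into Definition~\ref{df:coindt}. Condition $(i)$ says that some substitution $\theta$ makes the clause head equal to the current atom, which is exactly the term-matching requirement $A' = B_i\theta_i$ once one observes that $\theta_i$ acts only on the head-variables and leaves the variables of $A'$ untouched. Condition $(ii)$ demands that the body-atoms after substitution introduce no new variables outside the ambient fiber of $s$; for a \emph{regular} program, where the body variables form a subset of the head variables by definition, this is automatic, because $\textit{vars}(B^i_k\theta_i) \subseteq \textit{vars}(B_i\theta_i) = \textit{vars}(A') \subseteq \{x_1,\ldots,x_s\}$. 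Thus, for regular $P$, the multiset of clauses contributing to $p(s)(A)$ is in bijection with the clauses producing or-node children of $A$ in the coinductive tree, and the inductive step goes through. This contrasts with the irregular program TQ of Example~\ref{ex:TQ}, where condition $(ii)$ forces $p(0)(P(b))$ to branch over every ground instance $\{P(a)\},\{P(b)\},\{P(c)\},\ldots$, while the coinductive tree simply stops at $P(X)$ by term-matching; that is exactly why the general constructive completeness fails and why regularity is needed.

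For the \emph{moreover} clause, the finite case follows by noting that $T_G$ has depth exactly $2n$ iff every atom at depth $2n$ is either $\Box$ (i.e., $p(s)$ returns $\{\emptyset\}$ with an empty inner set) or a failure leaf (i.e., $p(s)$ returns $\emptyset$); in either case no further expansion is generated, so the cone stabilises at stage $n$ and $\overline{p}(G) = p_n(s)(G)$. For the infinite case, the inductive agreement above identifies, for each $n$, the truncation of $T_G$ at depth $2n$ with $p_n(s)(G)$, and $\overline{p}(s)(G)$ is by definition the element of $\lim_\omega p_n(s)(At)$ determined by this compatible family of truncations. The main obstacle, as already indicated, is a clean formulation of the fiber-tracking argument for condition $(ii)$: one has to make sure that the bookkeeping of variables under the term-matcher $\theta_i$ really stays inside $\{x_1,\ldots,x_s\}$ at every depth, which is where regularity propagates inductively from parents to children along the tree.
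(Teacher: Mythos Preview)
Your proposal is correct and follows essentially the same inductive approach as the paper's (very terse) proof. The one point the paper makes more sharply than you do concerns condition~$(i)$: rather than saying it is ``exactly'' term-matching, the paper observes that for a regular clause the substitution $\theta$ is effectively restricted to the head variables, and any such $\theta$ with $H\theta = A'$ is \emph{necessarily} an mgu --- this is what closes the gap between the semantics (which does not require $\theta$ to be most general) and Definition~\ref{df:coindt} (which does), and it is worth stating explicitly alongside your treatment of condition~$(ii)$.
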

\begin{proof}
The core of this inductive proof is an observation that, if $\theta$ is restricted to the variables of $A$ as it effectively happens for regular programs, and $A\theta = B$, 
then necessarily $\theta$ is an mgu. This fact eliminates the difference between Item $(i)$ and the term-matching for coinductive trees. Further, Item $(ii)$ is redundant 
for regular programs. Note that we use $P_fP_f$, rather than $P_cP_f$ here.
\end{proof}

\subsection{From Derivation Algorithm to Implementation}

The CoALP implementation of coinductive trees reuses the notion of clause-trees (cf. Definition~\ref{def:clause-tree})
presented in Subsection~\ref{subsec:and-or-g}. 

\begin{example}
Figure~\ref{pic:clausetree} shows the clause-trees for all the clauses of the \texttt{BinaryTree} program.
 
 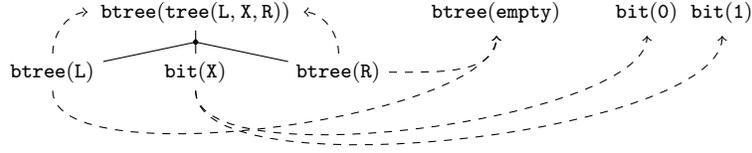
\begin{figure}
\begin{center}
\footnotesize{
  \begin{tikzpicture}[scale=0.2,font=\footnotesize,baseline=(current bounding box.north),grow=down,level distance=20mm]
      \node (btree) {$\mathtt{btree(tree(L,X,R))}$}
        child {[fill,sibling distance=95mm] circle (4pt)
            child { node (btreeL) {$\mathtt{btree(L)}$} }
            child { node (btreeX) {$\mathtt{bit(X)}$} }
            child { node (btreeR) {$\mathtt{btree(R)}$} } };
      \node [right of=btree,node distance=4cm] (btreeE){$\mathtt{btree(empty)}$};
      \node [right of=btreeE,node distance=2cm] (bit0) {$\mathtt{bit(0)}$};
      \node [right of=bit0,node distance=1cm] (bit1) {$\mathtt{bit(1)}$};
      \draw[->,dashed,shorten >=2pt]
          (btreeX) to[out=270,in=270,looseness=0.5] (bit0);
      \draw[->,dashed,shorten >=2pt]
          (btreeX) to[out=270,in=270,looseness=0.5] (bit1);
      \draw[->,dashed,shorten >=2pt]
          (btreeL) to[out=90,in=180] (btree);
      \draw[->,dashed,shorten >=2pt]
          (btreeL) to[out=270,in=270,looseness=0.5] (btreeE);
      \draw[->,dashed,shorten >=2pt]
          (btreeR) to[out=90,in=0] (btree);
      \draw[->,dashed,shorten >=2pt]
          (btreeR) to[out=0,in=270] (btreeE);
\end{tikzpicture}
}
\end{center}
\vspace{-1.5\baselineskip}
\caption{\footnotesize{Clause-trees for the \texttt{BinaryTree} program with dashed lines denoting open list references.}}
\label{pic:clausetree} 
\end{figure}
 
\end{example}

Clause-trees of Definition~\ref{def:clause-tree} remain to be the building blocks of coinductive trees. However, the generation of 
coinductive trees from clause-trees is slightly different to the one presented in Construction~\ref{cons:and-or}. We assume all programs are pre-processed this way.

\begin{construction}[Go-coinductive tree]\label{cons:cotree}
Given a logic program $P$ and a goal $G=\gets A$, generate a Go-coinductive tree $T$ as follows:

\begin{enumerate}
 \item A root $A$ for $T$ is created as an and-node containing the goal atom.
  \item The open list of the root $A$ is constructed by adding references to all clause-trees that have a unifiable root atom.
 \item For each reference in an open list $O$ of a node $A'$ where the corresponding atom matches the referenced root node's atom $R$, a copy of the or-node below the
 referenced node and all its children in the clause-tree are added as child to $A'$.  All the substitutions that were needed to make $R$ match are also applied to the newly 
 copied node atoms. The reference is then deleted from $O$.
 \item This process continues until all references in all the open lists in the tree $T$ have been processed.
\end{enumerate}
\end{construction}

The first difference to the construction of and-or parallel trees is that instead of needing equality to the root node of a referenced tree its now only required 
to be term matching in order to expand the tree. The second difference is that the newly added nodes to the tree require application of the substitutions needed 
for term-matching of the clause-tree they belong to. Since requiring equality in the Construction~\ref{cons:and-or} did not generate substitutions, this step was
previously unnecessary.

\begin{example}
 
Given the query \texttt{btree(tree(X,X,R))} in the \verb"BinaryTree" program, we construct its Go-coinductive tree as follows. We start with a tree that consists 
of the goal atom as root and-node and no or-node children. Then, we add references to all clause-trees with unifiable root node atom --- in this case is the clause-tree 
for \texttt{btree(tree(L,X,R))}. For this root to match the query, the substitution \texttt{L/X} is needed. Therefore, we copy the or-node and its and-node children from 
the clause-tree for \texttt{btree(tree(L,X,R))} below our tree root and apply the aforementioned substitution to the copied clause-tree nodes. Now, we should process all
nodes in this newly created tree that have references to other clause-tree roots, but there are no nodes with references that match. Therefore, the process is finished,
and the resulting tree will look like the one depicted in the right side of Figure~\ref{fig:treeu}.

\end{example}

\begin{lemma}\label{lemma:cotreecons}
Let $P$ be a logic program and let $G$ be an atomic goal. Then, the coinductive tree of $G$ is given by Construction~\ref{cons:cotree}.
\end{lemma}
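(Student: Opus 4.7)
The plan is to prove the lemma by induction on tree depth, showing that the Go-coinductive tree of Construction~\ref{cons:cotree} and the coinductive tree of Definition~\ref{df:coindt} agree level by level. The base case is immediate: both trees consist of the root and-node labelled by $A$, corresponding to step~(1) of the construction and the first bullet of the definition. For the inductive step, assume the trees coincide up to depth $2k$ and consider any and-node $A'$ at depth $2k$. According to Definition~\ref{df:coindt}, $A'$ must receive one or-node child for every program clause $C_i = B_i \gets B^i_1,\ldots,B^i_{n_i}$ admitting a term-matcher $\theta_i$ with $A' = B_i\theta_i$, and the $i$-th or-node must have children $B^i_1\theta_i,\ldots,B^i_{n_i}\theta_i$ at depth $2k+2$.

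To show that Construction~\ref{cons:cotree} produces exactly these children, I would argue that the open list attached to $A'$ — either built in step~(2) if $A'$ is the root, or inherited from the clause-tree copied during the expansion of $A'$'s parent (with the ancestral substitutions propagated as described in step~(3)) — contains a reference to the clause-tree for $C_i$ precisely when $A'$ term-matches its root $B_i$. Step~(3) then copies the or-node and its $n_i$ and-node children from the clause-tree and applies the accumulated substitution $\theta_i$, yielding the atoms $B^i_1\theta_i,\ldots,B^i_{n_i}\theta_i$ required by Definition~\ref{df:coindt}. Since step~(4) iterates this until every reference has been processed, the construction unfolds exactly one layer of Definition~\ref{df:coindt} at every node, preserving the inductive hypothesis. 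The extension to possibly infinite trees is obtained by observing that Construction~\ref{cons:cotree} proceeds monotonically; as in the limit argument underlying Lemma~\ref{lem:CC}, the fixed point of this process is precisely the (possibly infinite) tree of Definition~\ref{df:coindt}.

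The main obstacle lies in reconciling the two different matching conditions: Definition~\ref{def:clause-tree} populates open lists using \emph{unifiability} of a body atom with a clause head in the program, whereas Construction~\ref{cons:cotree} uses \emph{term-matching} at expansion time, which is stricter. I would therefore need to verify two directions: (soundness) a reference that cannot yield a term-matcher is silently skipped, which is built into step~(3)'s matching check; and (completeness) every clause $C_i$ whose head term-matches $A'$ is present in $A'$'s open list. Completeness follows from the fact that if $A' = B^j\sigma$ is obtained from an ancestral body atom $B^j$ by accumulated substitutions, and $A' = B_i\theta_i$ for some term-matcher, then $B^j$ and $B_i$ were unifiable, so the reference was installed during the initial pre-processing and is carried into $A'$'s open list when the relevant clause-tree is copied. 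Once this correspondence is established, the inductive step closes cleanly.
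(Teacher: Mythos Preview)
Your proposal is correct and follows essentially the same approach as the paper's proof: both compare the two trees node by node, and both hinge on the observation that if an and-node $A'$ (arising as $A^*\sigma$ for some original body atom $A^*$ and accumulated substitution $\sigma$) term-matches a clause head $B_i$, then $A^*$ and $B_i$ were already unifiable, so the reference to $C_i$ was placed in $A^*$'s open list at pre-processing time and is therefore available at $A'$. Your write-up is in fact somewhat more careful than the paper's --- you make the induction on depth explicit, you separate the soundness and completeness directions of the matching/unifiability reconciliation, and you add a remark about the infinite case --- but the core argument is identical.
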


We employ an optimisation technique to minimise the work for constructing Go-coinductive trees. 
When checking a referenced root node for term-matching, it is also checked whether it can be unified. If it is not unifiable, then  term-matching is impossible even if 
substitutions will be applied later when processing the tree. Therefore, non-unifiable references to clause-tree roots can be immediately removed from the open lists.
This process of filtering open lists and copying or-nodes and their child nodes during tree expansion can be done in parallel as no variable substitutions in existing 
atoms is required. It only needs to be guaranteed that only one thread changes the properties of one node concurrently. In the same way, no run-time coordination for
tree merging or variable substitution is needed. However, the parallel expansion option should only be employed on large trees to amortise the additional overhead of dispatching and 
managing multiple threads that execute the mentioned tasks. 

\subsection{Case Study: Ttree}

The construction of coinductive trees for queries in the \verb"BinaryTree" program is too fast to notice any benefit from parallelism; instead we introduce a new example. 

\begin{example}[\texttt{Ttree}]\label{ex:ttree}
This program has three and-parallel branches and thereby can be used to construct coinductive trees with three branches at each non leaf or-node. 

\begin{verbatim}
 1. ttree(0). 
 2. ttree(s(X)) :- ttree(X), ttree(X), ttree(X).
\end{verbatim}

\end{example}

\begin{figure}
\centering
\begin{tikzpicture}[scale=.75]
	\begin{axis}[xlabel= threads (t) and expand threads (e), xlabel style={at={(0.5,-.2)},anchor=north},
		ylabel=speedup,legend style={at={(0.1,.9)},anchor=north west},
		xticklabels={1t,2t,3t,4t,5t,6t,6t+1e,6t+2e,6t+3e,6t+4e,6t+5e,6t+6e},xtick={1,...,12}, x tick label style={rotate=90,anchor=east}]
	\addplot  coordinates {
		(1,1)
		(2,1.12)
		(3,1.19)
		(4,1.19)
		(5,1.23)
		(6,1.23)
		(7,1.68)
		(8,1.94)
		(9,1.94)
		(10,2.05)
		(11,2.17)
		(12,2.17)
	};
	\addplot coordinates {
		(1,1)
		(2,1.10)
		(3,1.17)
		(4,1.18)
		(5,1.2)
		(6,1.21)
		(7,2.25)
		(8,2.78)
		(9,3)
		(10,3.34)
		(11,3.54)
		(12,3.65)
	};
	\addplot coordinates {
		(1,1)
		(2,1.10)
		(3,1.17)
		(4,1.20)
		(5,1.22)
		(6,1.23)
		(7,2.36)
		(8,3.31)
		(9,3.67)
		(10,4.07)
		(11,4.41)
		(12,4.52)
				
	};
	\addplot  coordinates {
		(1,1)
		(2,1.15)
		(3,1.23)
		(4,1.25)
		(5,1.29)
		(6,1.31)
		(7,2.39)
		(8,3.06)
		(9,3.24)
		(10,3.71)
		(11,3.99)
		(12,4.07)
	};

	\legend{{\tiny \texttt{ttree(s$^{11}$(0))} }, {\tiny \texttt{ttree(s$^{12}$(0))} }, {\tiny \texttt{ttree(s$^{13}$(0))} }, {\tiny \texttt{ttree(s$^{14}$(0))} }}	

	\end{axis}
\end{tikzpicture}
\caption{Speedup of the construction of the coinductive tree of \texttt{ttree(s$^i$(0))}, relative to the base case with 1 thread, for different values of $i$ and
different parameters.}\label{fig:speedupttree}
\end{figure}
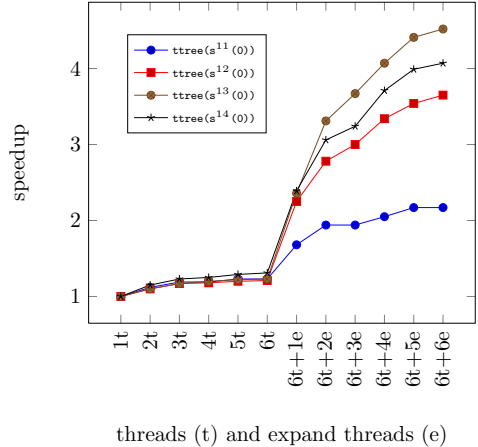

We benchmark our implementation by constructing  the tree and thereby proof for the query \verb"ttree(s"$^i$\verb"(0))", where $i$ indicates the number 
of times that the function \verb"s" is nested (e.g.  \verb"ttree(s"$^2$\verb"(0))" is equal to  \verb"ttree(s(s(0)))"). Given the query \verb"ttree(s"$^i$\verb"(0))",
the associated coinductive tree will have $3^i$ leaf nodes. Therefore, we can expect that the construction of coinductive trees will get advantage of 
a parallel expansion; this is confirmed in Figure~\ref{fig:speedupttree}.

The construction of coinductive trees clearly speeds up with introduction of parallelism. 
Increasing the number of threads speeds up the execution time whether the parallel expand option is activated
(maximum speedup of $4.52$), or not (maximum speedup of $1.31$).
Note that generalising from ground logic programs to first-order logic programs did not reduce the best-case scenario speedup of $\approx 4.5$.

\section{Derivations by Coinductive Trees}\label{sec:coderivations}

In the previous sections, we have seen how first-order atoms, clauses, as well as individual derivations can be modeled using category theoretic constructs.
We have taken inspiration from the coalgebraic semantics to introduce the notion of coinductive trees. 
However, as can be seen from e.g. Figures \ref{fig:TQ} and \ref{fig:fiber}, one coinductive tree may not 
produce the answer corresponding to a refutation by the SLD-resolution. Instead, a sequence of coinductive 
trees may be needed to advance the derivation. 
In this section, we introduce the derivations involving coinductive trees, and discuss their relation to the coalgebraic semantics.
The coalgebraic derivation algorithm follows closely \cite{KP11-2,KPS12-2}, but the implementation we present here is new.

\subsection{Coinductive Derivations: relating semantics to derivations}\label{sec:coder}

We start with  composing coinductive  trees into derivations.

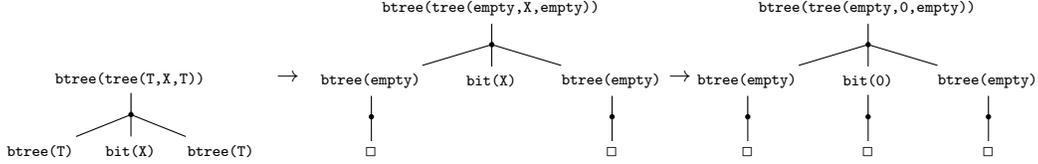
\begin{figure}
\begin{center}
\footnotesize{
  \begin{tikzpicture}[level 1/.style={sibling distance=15mm},
level 2/.style={sibling distance=15mm},
level 3/.style={sibling distance=5mm},scale=.8,font=\tiny]
  \node (root) {\texttt{btree(tree(T,X,T))}} [level distance=6mm]
     child { [fill] circle (1pt)
             child { node {\texttt{btree(T)}}}
             child { node {\texttt{bit(X)}} }
             child { node {\texttt{btree(T)}}}
           };
  \end{tikzpicture}
    \begin{tikzpicture}
     \draw (0,1) node{$\rightarrow$};
     \draw (0,0) node{\textcolor{white}{$\rightarrow$}};
    \end{tikzpicture}
  \begin{tikzpicture}[level 1/.style={sibling distance=15mm},
level 2/.style={sibling distance=20mm},
level 3/.style={sibling distance=5mm},scale=.8,font=\tiny]
  \node (root) {\texttt{btree(tree(empty,X,empty))}} [level distance=6mm]
     child { [fill] circle (1pt)
             child { node {\texttt{btree(empty)}}
                     child {[fill] circle (1pt) 
			    child {node {$\Box$}}
			    }  
                   }
             child { node {\texttt{bit(X)}} }
             child { node {\texttt{btree(empty)}}
						child {[fill] circle (1pt) 
			    child {node {$\Box$}}}
           }};
  \end{tikzpicture}
        \hspace{-.3cm}
    \begin{tikzpicture}
     \draw (0,1) node{$\rightarrow$};
     \draw (0,0) node{\textcolor{white}{$\rightarrow$}};
    \end{tikzpicture}
      \hspace{-.3cm}
  \begin{tikzpicture}[level 1/.style={sibling distance=15mm},
level 2/.style={sibling distance=20mm},
level 3/.style={sibling distance=5mm},scale=.8,font=\tiny]
  \node (root) {\texttt{btree(tree(empty,0,empty))}} [level distance=6mm]
     child { [fill] circle (1pt)
             child { node {\texttt{btree(empty)}}
                     child {[fill] circle (1pt) 
			    child {node {$\Box$}}
			    }  
                   }
             child { node {\texttt{bit(0)}} 
						 child {[fill] circle (1pt) 
			    child {node {$\Box$}}
			    }  }
             child { node {\texttt{btree(empty)}}
                     child {[fill] circle (1pt) 
			    child {node {$\Box$}}
			    }  
                   }
           };
  \end{tikzpicture}
	}
\end{center}
\caption{\footnotesize{A coinductive derivation for the goal \texttt{btree(tree(T,X,T))} and the program \texttt{BinaryTree}. Note that one tree cannot find the correct answer
(substitution), but it needs two steps to compute one possible answer. The first transition is given by $\theta_0 = T/\texttt{empty}$, the second by $\theta_1 = X/\texttt{0}$. }}
\label{pic:tree2} 
\end{figure}

\begin{definition}\label{df:coind-res2}
  Let $G$ be a goal given by an atom $\gets A$ and the 
  coinductive tree $T$ induced by $A$, and let $C$ be a clause $H \gets
  B_1, \ldots , B_n$.  Then goal $G'$ is \emph{coinductively derived}
  from $G$ and $C$ using mgu $\theta$ if the following conditions
  hold:
  \begin{itemize}
   
\item[$\bullet$] $A'$ is an atom in $T$.
\item[$\bullet$] $\theta$ is an \emph{mgu} of $A'$ and $H$.
\item[$\bullet$] $G'$ is given by the atom $\gets A\theta$ and the
  coinductive tree $T'$ determined by $A\theta$.
  
  \end{itemize}

\end{definition}

Coinductive derivations resemble \emph{tree rewriting}, an example is shown in Figure \ref{pic:tree2}. They produce the ``lazy'' corecursive effect:  
derivations are given by potentially infinite number of steps, where each individual step (coinductive tree) is executed in finite time. The ultimate goal of derivations 
is to find success (sub)trees.

\begin{definition}\label{df:derivsub2}
  Let $P$ be a logic program, $G$ be an atomic goal, and $T$ be a
  coinductive tree  determined by $P$ and $G$.  A subtree
  $T'$ of $T$ is called a \emph{coinductive subtree} of $T$ if it
  satisfies the following conditions:
\begin{itemize}
\item the root of $T'$ is the root of $T$ (up to variable renaming);
\item if an and-node belongs to $T'$, then one of its children belongs
  to $T'$.
\item if an or-node belongs to $T'$, then all its children belong to $T'$.
\end{itemize}
A finite coinductive (sub)tree is called a {\em success (sub)tree} if its leaves are empty goals (equivalently, they are followed only by $\Box$ in the usual pictures). 
\end{definition}

Definition of coinductive derivations and refutations is an adaptation of Definition~\ref{df:SLD2}, modulo Definitions~\ref{df:coind-res2} and \ref{df:derivsub2}.

  \begin{example} Figure~\ref{pic:tree2} shows an example of a coinductive derivation. The last coinductive tree is also a success subtree in itself; which means the derivation
  has been successful.
\end{example}
	
Transitions between coinductive trees can be done in a sequential or parallel manner, as we discuss in the next section. In \cite{KPS12-2}, we have proven the soundness and completeness 
of the coalgebraic derivations relative to the coalgebraic semantics of \cite{KMP10,KP11}. 
What would constructive completeness result say here? It would have to build upon Lemma~\ref{lem:CC} when it comes to modelling individual trees in the derivations. In addition, 
we would need to 
produce a construction (algorithm) that, for any goal $G(x_1, \ldots x_k)$ and any map $\theta$ in  
$\mathcal{L}_{\Sigma}^{op}$ (corresponding to a substitution) would produce a coinductive derivation starting at $G(x_1,\ldots , x_k)$ and ending at  $G(x_1,\ldots , x_k)\theta$,
$\theta$ being a  computed substitution. However, this result would be analogous to proving decidability of entailment for the Horn-clause LP, whereas it is only semi-decidable.
This is why, completeness is not generally stated in a constructive form for LP and instead involves the nonconstructive existence assertion, see \cite{Llo88} for 
standard completeness statements and \cite{KPS12-2} for CoALP.

\subsection{From Derivation Algorithm to Implementation}

The construction of coinductive derivations is modelled as a search through the graph of coinductive trees connected by the derivation operation. To keep track of which trees have
to be processed, the implementation maintains an ordered list of coinductive trees that is called the \emph{work queue}. Initially, this list will be filled with the coinductive 
tree constructed from the input goal. The top level control-flow-loop dispatches coinductive trees from the work queue to be processed simultaneously by multiple worker threads.
Worker threads are implemented by goroutines that are lightweight threads of execution and communicate via Go\rq{}s channels for passing values. Each worker executes the
following simplified steps independently on the received tree:

\begin{itemize}
    \item checks and reports if the tree contains a success subtree,
    \item finds nodes with a non-empty open list and compute the set of distinct \emph{mgu's} needed to unify with the referenced clause-tree roots,
    \item for each \emph{mgu} found, applies the \emph{mgu} to a copy of the tree then expands it by the process outlined in construction \ref{cons:cotree},
    \item sends the created trees back to the main control loop to be added back to the work pool for further processing.
\end{itemize}

This process can be run with a variable number of worker threads until either the requested amount of success subtrees are found or the work queue is empty and all the possible
coinductive trees for the query have been processed. This search process requires to take some decisions about strategies to manage the work queue, find open nodes, compacting
coinductive trees and ordering solutions. 



\textbf{Organisation of the work queue.}
The work queue is sufficient to keep track of all trees which have derivation steps that still need to be evaluated.
This list of coinductive trees can be managed either as a first-in-first-out queue or alternatively as a last-in-first-out stack.
This determines the search strategy that is employed to find success trees in the possible derivation chains: 
depth-first in the case of last-in-first-out stack (this is the strategy followed in PROLOG), and breadth-first 
in the case of first-in-first-out queue (which is the strategy followed in CoALP).

A depth-first search strategy has the drawback of not finding some possible success trees if the search forever follows an unrelated infinite derivation chain. By construction, 
coinductive derivation trees are always finite; however, this does not restrict the possibility of an infinite chain of derivations between trees. The breadth-first search is 
the strategy currently chosen in CoALP and therefore no traditional backtracking is employed in contrast to PROLOG. This has the usual drawback of requiring more memory than a 
depth-first search. However, it allows CoALP to easily report solutions sorted by the number of substitutions. If a solution with a specific amount of substitutions is reported, 
it can be guaranteed that no solution with a lesser amount of substitutions will be found later.

\textbf{Strategies to find open nodes.}
Open nodes are and-nodes which contain at least one reference in their open list which points to a root node of a clause-tree with a unifiable atom. Not only leave nodes in the
tree need to be checked but any node. An example program which can give rise to such a situation can be seen in Example~\ref{ex:TQ} where the node for $\mathtt{Q(X)}$ in the tree
for the goal $\mathtt{T(X,c)}$ contains a non empty open list with a reference to $\mathtt{Q(1)}$. There are different strategies to find open nodes. If the left- or right-most
deepest branch is always searched for open nodes, this would mimic PROLOG\rq{}s depth-first search approach to SLD-resolution. To obtain solutions not found in the depth-first
manner, a breadth-first search approach to find open nodes is employed. It always chooses the left most node on the lowest level possible. This will yield a transition to each
possible reachable coinductive tree after a finite number of derivations. This is especially important when dealing with co-inductive or 
cyclic data structures. If no open node in the coinductive tree can be found, the tree will simply be discarded as no further derivations are possible.

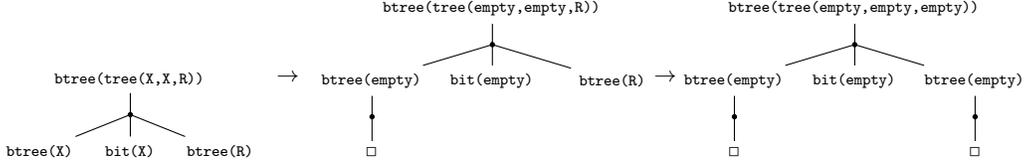
\begin{figure}
\begin{center}
\footnotesize{
  \begin{tikzpicture}[level 1/.style={sibling distance=15mm},
level 2/.style={sibling distance=15mm},
level 3/.style={sibling distance=5mm},scale=.8,font=\tiny]
  \node (root) {\texttt{btree(tree(X,X,R))}} [level distance=6mm]
     child { [fill] circle (1pt)
             child { node {\texttt{btree(X)}}}
             child { node {\texttt{bit(X)}} }
             child { node {\texttt{btree(R)}}}
           };
  \end{tikzpicture}
    \begin{tikzpicture}
     \draw (0,1) node{$\rightarrow$};
     \draw (0,0) node{\textcolor{white}{$\rightarrow$}};
    \end{tikzpicture}
  \begin{tikzpicture}[level 1/.style={sibling distance=15mm},
level 2/.style={sibling distance=20mm},
level 3/.style={sibling distance=5mm},scale=.8,font=\tiny]
  \node (root) {\texttt{btree(tree(empty,empty,R))}} [level distance=6mm]
     child { [fill] circle (1pt)
             child { node {\texttt{btree(empty)}}
                     child {[fill] circle (1pt) 
			    child {node {$\Box$}}
			    }  
                   }
             child { node {\texttt{bit(empty)}} }
             child { node {\texttt{btree(R)}}}
           };
  \end{tikzpicture}
        \hspace{-.3cm}
    \begin{tikzpicture}
     \draw (0,1) node{$\rightarrow$};
     \draw (0,0) node{\textcolor{white}{$\rightarrow$}};
    \end{tikzpicture}
      \hspace{-.3cm}
  \begin{tikzpicture}[level 1/.style={sibling distance=15mm},
level 2/.style={sibling distance=20mm},
level 3/.style={sibling distance=5mm},scale=.8,font=\tiny]
  \node (root) {\texttt{btree(tree(empty,empty,empty))}} [level distance=6mm]
     child { [fill] circle (1pt)
             child { node {\texttt{btree(empty)}}
                     child {[fill] circle (1pt) 
			    child {node {$\Box$}}
			    }  
                   }
             child { node {\texttt{bit(empty)}} }
             child { node {\texttt{btree(empty)}}
                     child {[fill] circle (1pt) 
			    child {node {$\Box$}}
			    }  
                   }
           };
  \end{tikzpicture}
	}
\end{center}
\caption{\footnotesize{A coinductive derivation for the goal \texttt{btree(tree(X,X,R))}. Parallel expansion of nodes in a coinductive tree does not lead to unsound substitutions.}}
\label{pic:tree1} 
\end{figure}

\begin{example}
In Figure \ref{pic:tree1}, the coinductive derivation tree for the goal
\texttt{btree(tree(X,X,R))} is shown on the left. The open leaf on the lowest level to the left is 
\texttt{btree(X)} with the tree templates with the goals \texttt{tree(empty)} and $\mathtt{tree(L_1,B_1,R_1)}$. 
Therefore the distinct \emph{mgu's} are \texttt{$\{$X/empty$\}$} and $\mathtt{\{X/tree(L_1,B_1,R_1)\}}$. They yield different 
derivation trees after expansion. The one for \texttt{\{X/empty\}} is depicted in the middle of Figure \ref{pic:tree1}.
Note that for this tree no expansion was needed as no nodes with open lists have new term matching goals.
\end{example}

\textbf{Compacting and pruning coinductive trees.}
To minimise the amount of used memory and to avoid unnecessary copying of nodes, the implementation contains various mechanisms to remove nodes from the trees in the internal data
structures -- these nodes are guaranteed to be irrelevant for further derivations and to determine whether the tree contains a success subtree. For each found \emph{mgu} for 
the selected open node, a distinct copy of the coinductive tree is created and the unifier applied, afterwards this new tree is expanded. Multiple generated coinductive trees
represent different branches in the coinductive derivation process. If only one branch is needed, the original tree is reused to avoid an unnecessary copy, and the substitution 
is applied directly to it -- this speeds up non branching clauses.  During the copy process, trees are pruned by removing success and non-succeeding subtrees. Also,
chains of single or-nodes are shortened by removing intermediate and-nodes with empty open leaves. These optimisations correspond to the trimming of stack
frames (e.g. tail call optimisation) -- a technique employed in PROLOG.

\textbf{Ordering of solutions.}
Due to the nature of pre-emptive threading, CPU cores working at different speeds, memory access having varying latencies, and trees requiring different computational 
effort; the order of returned trees in the work queue is not deterministic and, therefore, it will change if more than one worker thread is involved. Using the substitution
length of all the substitutions in the derivation chain as priority ranking, we gain an enumeration order even for a potentially infinite lazy derivation processes and the 
implementation can report solutions in this order.

\begin{example}
While an infinite number of coinductive trees can in principle be produced for the goal $\mathtt{btree(X)}$, the algorithm returns the solutions
for the first five success trees in finite time in the following order:

\noindent  $\mathtt{btree(empty)}$ with \texttt{X/empty}\\
$\mathtt{btree(tree(empty,0,empty))}$ with \texttt{X/tree(L1,B1,R1)}, \texttt{L1/empty}, \texttt{B1/0}, \texttt{R1/empty}.\\
$\mathtt{btree(tree(empty,1,empty))}$ with \texttt{X/tree(L1,B1,R1)}, \texttt{L1/empty}, \texttt{B1/1}, \texttt{R1/empty}.\\
$\mathtt{btree(tree(tree(empty,0,empty),0,empty))}$ with \texttt{X/tree(L1,B1,R1)}, \texttt{L1/tree(L2,B2,R2)}, \texttt{L2/empty}, \texttt{B2/0}, \texttt{R2/empty},
\texttt{B1/0}, \texttt{R1/empty}.\\
$\mathtt{btree(tree(empty,0,tree(empty,0,empty)))}$ with \texttt{X/tree(L1,B1,R1)}, \texttt{L1/empty}, \texttt{B1/0}, \texttt{R1/tree(L2,B2,R2)},
\texttt{L2/empty}, \texttt{B2/0}, \texttt{R2/empty}.\\
\end{example}

This is implemented by buffering success trees in a priority queue. If it is determined that the work queue and all workers only hold derivation trees with the same or
higher number of derivation steps, the solutions up to that number are returned from the solution queue. If the work queue is empty and all worker threads are idling,
then all the solutions in the queue are returned and the program exits. Note that the program may continue (co)recursively for indefinitely long.

A PROLOG-like system with deterministic depth-search would produce solutions \texttt{btree(empty)}, \texttt{btree(tree(empty,0,empty))},\linebreak
\texttt{btree(tree(empty,0,tree(empty,0,empty)))}, but not e.g. \texttt{btree(tree(empty,1,empty))}. Thereby, it does not generate the same set of solutions
even if run indefinitely, and does not discover some of the solutions that CoALP does for the \texttt{BinaryTree} program.

\subsection{Case Study: Binary Tree}

\begin{figure}
\begin{center}

\begin{tikzpicture}

\draw (0,0) node {

\begin{tikzpicture}[scale=.75]
	\begin{axis}[xlabel= threads (t) and expand threads (e), xlabel style={at={(0.5,-.2)},anchor=north},
		ylabel=speedup,legend style={at={(0.9,0.1)},anchor=south east},
		xticklabels={1t,2t,3t,4t,5t,6t,6t+1e,6t+2e,6t+3e,6t+4e,6t+5e,6t+6e},xtick={1,...,12}, x tick label style={rotate=90,anchor=east}]
		
	\addplot  coordinates {
		(1,1)	(2,1.87755102040816)	(3,2.3741935483871)	(4,3.2)	(5,3.68)	(6,4.08888888888889)	(7,4.32941176470588)	(8,4.04395604395604)	(9,4.04395604395604)	(10,4.13483146067416)	(11,4)	(12,4.25433526011561)
	};
	
	\addplot coordinates {
(1,1)	(2,1.9051724137931)	(3,2.42857142857143)	(4,3.15714285714286)	(5,3.81034482758621)	(6,4.16981132075472)		(7,4.65263157894737)	(8,4.70212765957447)	(9,4.42)	(10,4.75268817204301)	(11,4.70212765957447)	(12,4.70713525026624)

	};

	\addplot  coordinates {
          (1,1)	(2,1.71698113207547)	(3,2.42021276595745)	(4,3.0952380952381)	(5,3.4469696969697)	(6,4.00881057268723)		(7,4.29245283018868)	(8,4.375)	(9,4.3168880455408)	(10,4.26029962546817)	(11,4.27631578947368)	(12,4.3168880455408)

	};
        \addplot  coordinates {
(1,1)	(2,1.76037037037037)	(3,2.26333333333333)	(4,2.79588235294118)	(5,3.16866666666667)	(6,3.96083333333333)		(7,3.89590163934426)	(8,4.32090909090909)	(9,4.20619469026549)	(10,3.89590163934426)	(11,3.86422764227642)	(12,4.32090909090909)

	};
        \addplot  coordinates {
          (1,1)	(2,1.47878787878788)	(3,2.03333333333333)	(4,2.32380952380952)	(5,2.71111111111111)	(6,3.75384615384615)		(7,3.84251968503937)	(8,3.87301587301587)	(9,3.8125)	(10,3.75384615384615)	(11,3.75384615384615)	(12,4)

	};

\addplot coordinates {
  (1,1)	(2,1.79393939393939)	(3,2.368)	(4,2.96)	(5,3.28888888888889)	(6,4.22857142857143)		(7,4.45112781954887)	(8,4.55384615384615)	(9,4.05479452054795)	(10,4.02721088435374)	(11,4.51908396946565)	(12,3.94666666666667)

	};
        \legend{{\tiny 1000 solutions}, {\tiny 2000 solutions}, {\tiny 3000 solutions}, {\tiny 4000 solutions}, {\tiny 5000 solutions}, {\tiny 6000 solutions}}

	\end{axis}
\end{tikzpicture}
};

\draw (7,0) node{
\begin{tikzpicture}[scale=.75]
	\begin{axis}[xlabel=answers,
		ylabel=time,legend style={at={(.9,.5)},anchor=east}]
	\addplot  coordinates {
		(1000,  7.3)
		(2000,  44.2)
		(3000,  45.5)
		(4000,  47.5)
		(5000,  48.8)
		(6000,  59.2)
	};
	\addplot  coordinates {
		(1000,  1.8)
		(2000,  10.6)
		(3000,  11.3)
		(4000,  12.9)
		(5000,  13.3)
		(6000,  14.3)

	};
	\addplot  coordinates {
		(1000,  1.7)
		(2000,  9.3)
		(3000,  10.4)
		(4000,  11)
		(5000,  12.2)
		(6000,  13)
	};
	
	 \legend{{\tiny 1t}, {\tiny 6t}, {\tiny 6t+6e}}

	\end{axis}
\end{tikzpicture}
};

\end{tikzpicture}

\end{center}
\caption{\textbf{Left.} Speedup of the derivation of answers for the query \texttt{btree(X)}, relative to the base case with 1 worker thread, in the \texttt{BinaryTree} program 
for different parameters.
\textbf{Right.} Comparison of the time needed to produce answers with different worker threads.}\label{fig:speedupderivation}
\end{figure}
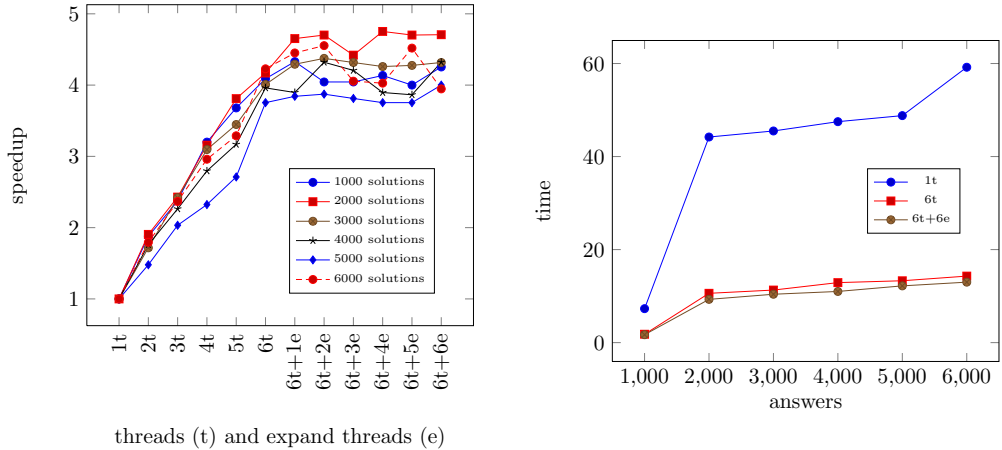

Let us show the benefits of using parallelism to derive answers for the \verb"BinaryTree" program. We focus on deriving different possibilities for substitution in the variable 
\verb"X" in the query \verb"btree(X)". 


Figure~\ref{fig:speedupderivation} shows the speedup when computing different amount of answers for the query \verb"btree(X)" using different parameters. First of all, we can notice
the great benefit of increasing the number of threads to derive answers when the parallel expansion option is not used; in particular, the maximum speedup is $4.23$.
The reason for these good results is simple: the main execution distributes the derivation of different solutions across all the available threads; and as each derivation is independent
from the others, they can be run simultaneously.
On the contrary, the number of expand threads that are used does not impact the performance of our implementation (the maximum speedup is $4.55$ that is not too different from the 
maximum speedup obtained without expand threads).

Note also the considerable difference in the runtime that is necessary to generate 2000 solutions instead of 1000 solutions -- a situation that does not happen in the rest of the cases.
This is due to the fact that the generation of 1000 solutions requires at most 6 derivation steps; on the contrary, the generation of 2000 solutions requires, in some cases, 7 steps. 
Table~\ref{tab:growing} shows that there is a considerable difference to generate all the solutions requiring at most $n$ derivation steps ($2^{n-1}\frac{(2n-2)!}{n!(n-1)!}$ and
$\sum\limits_{i=1}^n 2^{n-1}\frac{(2n-2)!}{n!(n-1)!}$ are the number of solutions that require respectively $n$ and at most $n$ derivation steps)
and the generation of all the solutions requiring at most $n$ derivation steps plus one solution that requires $n+1$ steps -- the time increases up to 4 times.

As a final remark, we can compare the runtime that we obtain here with the Datalog results in Subsection~\ref{subsec:dcsbtg}. 
First-order implementation derives answers faster using the \verb"BinaryTree" program than ground implementation does using the \texttt{BTG} programs. E.g. the first-order derivation 
method with the best configuration obtains $2000$ solutions in $9s$ whereas it takes at least $65s$ in the \texttt{BTG} case.
The lack of variables and references in the ground case means a more intense processing and use of memory space.
The first-order version has fewer rules than the ground case, which makes the use of CPU caches more efficient. 

This result is a good indication that the fibrational coalgebraic approach to first-order parallelism can be viable and efficient. 

\begin{table}
 \centering
 {\footnotesize 
    \begin{tabular}{| l || c | c | }
    \hline
                  & $\sum\limits_{i=1}^n 2^{n-1}\frac{(2n-2)!}{n!(n-1)!}$  solutions&  $1+\sum\limits_{i=1}^n 2^{n-1}\frac{(2n-2)!}{n!(n-1)!}$  solutions   \\ 
    \hline
    \hline
    $n=2$         & 28ms  & 111ms \\
    \hline
    $n=3$         & 178ms & 635ms\\
    \hline
    $n=4$         & 1.18s & 4.68s\\
    \hline
    $n=5$         & 8.92s & 38.83s \\
    \hline
   
\end{tabular}}
\caption{Difference of time when generating all the solutions that require at most $n$ steps and when generating all the solutions requiring at most $n$ derivation steps 
plus one solution that requires $n+1$ steps.}\label{tab:growing} 
\end{table}

\section{Conclusions}\label{sec:evaluation}

We have presented a novel implementation of CoALP, featuring two levels of parallelism: on the level of coinductive trees and on the level of coinductive derivations.
We have traced carefully how the concepts undergo their transformation, from abstract coalgebraic semantics to logic algorithm and then to implementation. 
By doing this, we have exposed the constrictive component of the  coalgebraic semantics~\cite{KP11,KPS12-2}.
The concepts of coinductive trees and coinductive derivations arising from the semantics allow for many approaches of exploiting parallelism. These have  been shown 
to provide practical speedup 
in our experimental implementation. Many improvements are planned to the current implementation, in particular, fine-tuning and-parallel tree transitions and memorisation. 
In the current stage of implementation, no memorisation of tree derivations is done. Much like in PROLOG systems, this however would save doing the same work multiple times
by different workers. A first step to minimise this would be to share subtree structures created during and-parallel derivations. Also, expansion of goals can be cached for
each clause making clause-tree templates not clause, but goal specific. Furthermore, \emph{mgu's} from open nodes could be processed all at once instead of one at a time to 
reduce created copies of trees by finding compatible \emph{mgu's} and applying them at same time. More sophisticated tabling algorithms to memorise computations for each 
clause can also be added. Another project is to apply CoALP to type inference, similar to \cite{AnconaLZ08}.

\bibliographystyle{plain}
\bibliography{KSH13}

\appendix

\section{Generation of balanced and unbalanced trees}\label{appendix1}

\begin{figure}[h]
\centering
\begin{Verbatim}[frame=single,fontsize=\scriptsize,commandchars=\\\{\},codes={\catcode`$=3\catcode`^=7}]  
BTA
Input: Number of iterations n 

bits = \{0,1\}
trees = \{tree(empty,0,empty), tree(empty,1,empty)\}
new_trees = \{\}
result = \{empty,tree(empty,0,empty), tree(empty,1,empty)\}
i = 0

while (i < n)
  for each t1 in trees
     for each t2 in trees 
        for each b in bits 
            new_trees = new_trees $\cup$ \{tree(t1,b,t2)\}
        end for
    end for
  end for
  i = i + 1
  result = result $\cup$ new_trees
  trees = new_trees
  new_trees=\{\}
end while

return result
\end{Verbatim}
\vspace{-.5cm}
 \caption{{\scriptsize The BTA algorithm generating balanced trees.}}
  \label{fig:BTA}
\end{figure}

\begin{figure}[h]
\centering
\begin{Verbatim}[frame=single,fontsize=\scriptsize,commandchars=\\\{\},codes={\catcode`$=3\catcode`^=7}]  
UTA
Input: Number of iterations n 

bits = \{0,1\}
trees = \{empty, tree(empty,0,empty), tree(empty,1,empty)\}
new_trees = \{\}
i = 0

while (i < n)
  for each t1 in trees
     for each t2 in trees 
        for each b in bits 
            new_trees = new_trees $\cup$ \{tree(t1,b,t2)\}
        end for
    end for
  end for
  i = i + 1
  trees = trees $\cup$ new_trees
  new_trees=\{\}
end while

return trees
\end{Verbatim}
\vspace{-.5cm}
 \caption{{\scriptsize The UTA algorithm generating unbalanced trees.}}
  \label{fig:UTA}
\end{figure}

\section{Execution time of the different programs}\label{appendix:times}

In this section, we show the runtime of the different programs that have appeared 
through the text.

The execution time of the Datalog programs (see end of Section~\ref{subsec:and-or-g} and Figure~\ref{fig:datalog} is given in
Table~\ref{table:datalog}. 

\begin{table}
\begin{center}
    \begin{tabular}{| l || c | c | c | c | c | c |}
    \hline
    \backslashbox{Program}{Threads} & 1 & 2 & 3 & 4 & 5 & 6\\ 
    \hline
    \hline
    datalog 1           & 8s &  6s &  4s &  3s &  3s &  3s\\
                        & 100\% &  75\% & 50\% & 37\% & 37\% & 37\% \\
    \hline
    datalog 2           & 3s &  2s &  2s &  1s &  1s &  1s\\
                        & 100\% &  66\% & 66\% & 33\% & 33\% & 33\% \\
    \hline
    datalog 3           & 50s &  31s &  23s &  19s &  16s &  14s\\
                        & 100\% &  62\% & 46\% & 38\% & 32\% & 28\% \\
    \hline
    datalog 4           & 71s &  44s &  32s &  26s &  23s &  20s\\
                        & 100\% &  61\% & 45\% & 36\% & 32\% & 28\% \\
    \hline
    datalog 5           & 54s & 33s &  23s &  19s &  17s &  14s\\
                        & 100\% &  61\% & 42\% & 35\% & 31\% & 25\% \\
    \hline
    datalog 6           & 53s &  32s &  23s &  18s &  16s &  14s\\
                        & 100\% &  60\% & 43\% & 33\% & 30\% & 26\% \\
    \hline
		\end{tabular}
\end{center}
\caption{Runtime of Datalog programs with different number of threads expanding
the derivation tree.}\label{table:datalog}
\end{table}

Table~\ref{table:bgt} shows the runtime of different versions of the \texttt{BTG} program generated with the \texttt{BTA} algorithm
using different parameters (see Section~\ref{subsec:dcsbtg}); and, Figure~\ref{percentage-bal} shows how the time is reduced when
increasing the number of threads. As can be seen from these results, the and-or derivation trees are generated faster when the \emph{parallel expansion} option 
is activated; namely, the slowest execution using \emph{parallel expansion} is faster than the fastest execution without using it. If we compare the slowest 
execution without using parallel expand with the fastest execution using this option, the runtime is reduced up to $75\%$. Moreover, 
it is worth mentioning that in case the parallel expansion option is not used, increasing the number of threads does not significantly speed up the execution time 
(time is reduced only up to $15\%$); on the contrary, using the parallel expansion option and increasing the number of expansion threads considerably reduces the 
runtime (up to $50\%$).

As can be noticed in Figure~\ref{percentage-bal}, the execution time are greatly reduced when using $3$ threads instead of $2$, but the reduction is smaller every time
 we add a new thread. This can be attributed to the fact that multiple threads have to compete for a limited amount of computing resources such as memory.

\begin{table}
\begin{center}

\hspace{-.5cm}
\begin{minipage}{.7\textwidth}
 
{\scriptsize 
    \begin{tabular}{| l || c | c | c | c | c | c |}
    \hline
    \backslashbox{Program}{Threads} & 1 & 2 & 3 & 4 & 5 & 6    \\ 
    \hline
    \hline
    \texttt{BTG}$_B$(200)           & 8 &  8 & 8 & 8& 7& 7\\
    \hline
    \texttt{BTG}$_B$(400)          & 65 & 60& 58 & 56 & 55  & 55\\
    \hline
    \texttt{BTG}$_B$(600)           & 212  & 193   & 185  & 180& 178 & 176 \\
    \hline
    \texttt{BTG}$_B$(800)           & 444 &406	 &387	 &374	&370	&366 \\ 
    \hline
    \texttt{BTG}$_B$(1000)           & 760  & 685	& 655	& 638	& 	633& 625	\\
    \hline

\end{tabular}}

\vspace{.5cm}

{\scriptsize 
    \begin{tabular}{| l || c | c | c | c | c | c |}
    \hline
    \backslashbox{Program}{E. threads} & 1 & 2 & 3 & 4 & 5 & 6    \\ 
    \hline
    \hline
    \texttt{BTG}$_B$(200)         & 5 & 4 & 3  &  3& 3& 3 \\
    \hline
    \texttt{BTG}$_B$(400)         & 32 &  27 & 21 & 19 & 18  & 17 \\
    \hline
    \texttt{BTG}$_B$(600)           & 100 & 90  & 68 &  61& 57 & 52 \\
    \hline
    \texttt{BTG}$_B$(800)           & 204 &	187 &	141 &	128&	118& 108 \\ 
    \hline
    \texttt{BTG}$_B$(1000)           & 346  & 	321& 	242& 217	& 200	& 	184\\
    \hline
    
\end{tabular}}

\end{minipage}
 \begin{minipage}{.3\textwidth}
 
 \vspace{.2cm}
 \begin{tikzpicture}[scale=.5]
 	\begin{axis}[xlabel=number of leaves,
 		ylabel=time,legend style={at={(0.1,.9)},anchor=north west}]
 	\addplot  coordinates {
 		(3960,8)
 		(10160,65)
 		(16360,212)
 		(22560,444)
 		(28760,760)
 	};
 	\addplot coordinates {
 		(3960,7)
 		(10160,55)
 		(16360,178)
 		(22560,366)
 		(28760,625)
 	};
 	\addplot coordinates {
 		(3960,5)
 		(10160,32)
 		(16360,100)
 		(22560,204)
 		(28760,346)
 	};
 	\addplot  coordinates {
 		(3960,3)
 		(10160,17)
 		(16360,52)
 		(22560,108)
 		(28760,184)
 	};
 	 \legend{{\tiny 1 thread}, {\tiny 6 threads}, {\tiny 1 expand thread}, {\tiny 6 expand threads}}
 
 	\end{axis}
 
%
%
 \end{tikzpicture}

 \end{minipage}

\end{center}
\caption{Runtime (in seconds) of different versions of \texttt{BTG} program with different parameters -- the value of $X$ in \texttt{BTG}$_B(X)$ indicates the number of clauses.
\textbf{Left. Top}. Runtime using different number of threads.
\textbf{Left. Bottom}. Runtime using different number of expansion threads. \textbf{Right}. Comparison of the time of the slowest and fastest executions using threads and
expand threads regarding the number of leaves. The query for all the programs is \texttt{btree(X)} and the runtime indicate the time (in seconds) needed to obtain all the possible answers. 
}\label{table:bgt}
\end{table}

\begin{figure}
\centering
\begin{tikzpicture}
\draw (0,0) node{
\begin{tikzpicture}[scale=.75]
	\begin{axis}[xlabel=threads,
		ylabel=percentage relative to 1 thread]
	\addplot  coordinates {
		(1,100)
		(2,99)
		(3,98)
		(4,97)
		(5,88)
		(6,87)
	};
	\addplot  coordinates {
		(1,100)
		(2,92)
		(3,89)
		(4,86)
		(5,84)
		(6,84)
	};
	\addplot  coordinates {
		(1,100)
		(2,91)
		(3,87)
		(4,84)
		(5,83)
		(6,83)
	};
	\addplot  coordinates {
		(1,100)
		(2,91)
		(3,87)
		(4,84)
		(5,83)
		(6,82)
	};
	\addplot coordinates {
		(1,100)
		(2,90)
		(3,86)
		(4,83)
		(5,83)
		(6,82)
	};
	 \legend{{\tiny BTG$_B$(200)}, {\tiny BTG$_B$(400)}, {\tiny BTG$_B$(600)}, {\tiny BTG$_B$(800)}, {\tiny BTG$_B$(1000)}}

	\end{axis}
\end{tikzpicture}};

\draw (7,0) node{
\begin{tikzpicture}[scale=.75]
	\begin{axis}[
		xlabel=threads,
		ylabel=percentage relative to 1 thread]
	\addplot  coordinates {
		(1,100)
		(2,80)
		(3,60)
		(4,60)
		(5,60)
		(6,60)
	};
	\addplot coordinates {
		(1,100)
		(2,84)
		(3,65)
		(4,59)
		(5,56)
		(6,53)
	};
	\addplot  coordinates {
		(1,100)
		(2,90)
		(3,68)
		(4,61)
		(5,57)
		(6,52)
	};
	\addplot  coordinates {
		(1,100)
		(2,91)
		(3,69)
		(4,62)
		(5,57)
		(6,52)
	};
	\addplot coordinates {
		(1,100)
		(2,92)
		(3,69)
		(4,62)
		(5,57)
		(6,53)
	};
	 \legend{{\tiny BTG$_B$(200)}, {\tiny BTG$_B$(400)}, {\tiny BTG$_B$(600)}, {\tiny BTG$_B$(800)}, {\tiny BTG$_B$(1000)}}

	\end{axis}	
\end{tikzpicture}};
\end{tikzpicture}

\caption{Reduction percentages relative to the 1 thread case on the \texttt{BTG}$_B$(X) program. \textbf{Left.} Reduction relative to 1 thread. 
\textbf{Right.} Reduction relative to 1 expand thread.}\label{percentage-bal}
\end{figure}
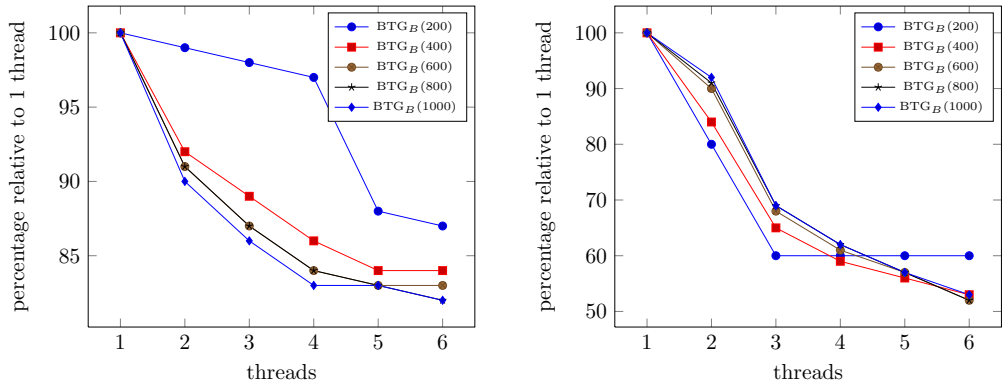

Table~\ref{table:bgt-unb} shows the runtime of different versions of the \texttt{BTG} program generated with the \texttt{UTA} algorithm
using different parameters (see Section~\ref{subsec:dcsbtg}); and, Figure~\ref{percentage-unbal} shows how the time is reduced when
increasing the number of threads.

\begin{table}
\begin{center}

 \begin{minipage}{.65\textwidth}

{\scriptsize 
    \begin{tabular}{| l || c | c | c | c | c | c |}
    \hline
    \backslashbox{Program}{Threads} & 1 & 2 & 3 & 4 & 5 & 6    \\ 
    \hline
    \hline
    \texttt{BTG}$_U$(500)           &  5&	4	&4	&4&	4	&4 \\
    \hline
    \texttt{BTG}$_U$(1000)          &  18&	17	&16	&15	&15	&15  \\
    \hline
    \texttt{BTG}$_U$(1500)            &  50&	44&	42	&40	&40&	39 \\
    \hline
    \texttt{BTG}$_U$(2000)            & 104&	91&	86&	83&	82&  81\\
    \hline
    \texttt{BTG}$_U$(2500)           & 196  & 173  &  163&  158& 155 & 154\\
    \hline

\end{tabular}}

\vspace{.5cm}

{\scriptsize 
    \begin{tabular}{| l || c | c | c | c | c | c |}
    \hline
    \backslashbox{Program}{E. threads} & 1 & 2 & 3 & 4 & 5 & 6    \\ 
    \hline
    \hline
    \texttt{BTG}$_U$(500)           & 4  &  4 & 4 & 4 & 4 & 4\\
    \hline
    \texttt{BTG}$_U$(1000)          & 14  &  14&   14& 14  & 14   & 14 \\
    \hline
    \texttt{BTG}$_U$(1500)            &36   & 36  & 34 & 34 & 34 &34 \\
    \hline
    \texttt{BTG}$_U$(2000)            &  72 &  71 & 65 & 65 & 65  & 65 \\
    \hline
    \texttt{BTG}$_U$(2500)           & 130  & 128  & 112 & 111 & 110  & 110\\
    \hline
    
\end{tabular}}

 \end{minipage}
 \begin{minipage}{.25\textwidth}
 
 \vspace{.2cm}
 \begin{tikzpicture}[scale=.5]
 \begin{axis}[xlabel=number of leaves,
 		ylabel=time,legend style={at={(0.1,.9)},anchor=north west}]
 	\addplot  coordinates {
 		(6656, 5)
 		(13352, 18)
 		(20590, 50)
 		(28518, 104)
 		(35248, 196)
 	};
 	\addplot  coordinates {
 		(6656, 4)
 		(13352, 15)
 		(20590, 39)
 		(28518, 81)
 		(35248, 154)
 	};
 	\addplot  coordinates {
 		(6656, 4)
 		(13352, 14)
 		(20590, 36)
 		(28518, 72)
 		(35248, 130)
 	};
 	\addplot  coordinates {
 		(6656, 4)
 		(13352, 14)
 		(20590, 34)
 		(28518, 65)
 		(35248, 110)
 	};
 	
 	 \legend{{\tiny 1 thread}, {\tiny 6 threads}, {\tiny 1 expand thread}, {\tiny 6 expand threads}}
 
 	\end{axis}
%
%
%
%
%
 \end{tikzpicture}
 
 \vspace{.5cm}
 
 \end{minipage}

\end{center}
\caption{Runtime (in seconds) of different versions of \texttt{BTG} program with different parameters -- the value of $X$ in \texttt{BTG}$_U(X)$ indicates the number of clauses. \textbf{Left. Top}. Runtime using different number of threads
\textbf{Left. Bottom}. Runtime using different number of expansion threads. \textbf{Right.} Comparison of the time of the slowest and fastest executions using 
threads and expand threads regarding the number of leaves. The query for all the programs is \texttt{btree(X)} and the
runtime indicate the time needed to obtain all the possible answers.}\label{table:bgt-unb}
\end{table}

\begin{figure}
\centering
\begin{tikzpicture}
\draw (0,0) node{
\begin{tikzpicture}[scale=.75]
	\begin{axis}[xlabel=threads,
		ylabel=percentage relative to 1 thread]
	\addplot  coordinates {
		(1,100)
		(2,80)
		(3,80)
		(4,80)
		(5,80)
		(6,80)
	};
	\addplot  coordinates {
		(1,100)
		(2,94)
		(3,88)
		(4,83)
		(5,83)
		(6,83)
	};
	\addplot coordinates {
		(1,100)
		(2,88)
		(3,84)
		(4,80)
		(5,80)
		(6,78)
	};
	\addplot  coordinates {
		(1,100)
		(2,87)
		(3,82)
		(4,79)
		(5,78)
		(6,77)
	};
	\addplot coordinates {
		(1,100)
		(2,88)
		(3,83)
		(4,80)
		(5,79)
		(6,78)
	};
	 \legend{{\tiny BTG$_U$(500)}, {\tiny BTG$_U$(1000)}, {\tiny BTG$_U$(1500)}, {\tiny BTG$_U$(2000)}, {\tiny BTG$_U$(2500)}}

	\end{axis}
\end{tikzpicture}};

\draw (7,0) node{
\begin{tikzpicture}[scale=.75]
	\begin{axis}[
		xlabel=threads,
		ylabel=percentage relative to 1 thread,legend style={at={(0.05,0.05)},anchor=south west}]
	\addplot  coordinates {
		(1,100)
		(2,100)
		(3,100)
		(4,100)
		(5,100)
		(6,100)
	};
	\addplot coordinates {
		(1,100)
		(2,100)
		(3,100)
		(4,100)
		(5,100)
		(6,100)
	};
	\addplot coordinates {
		(1,100)
		(2,100)
		(3,94)
		(4,94)
		(5,94)
		(6,94)
	};
	\addplot coordinates {
		(1,100)
		(2,98)
		(3,90)
		(4,90)
		(5,90)
		(6,90)
	};
	\addplot coordinates {
		(1,100)
		(2,98)
		(3,86)
		(4,85)
		(5,84)
		(6,84)
	};
	 \legend{{\tiny BTG$_U$(500)}, {\tiny BTG$_U$(1000)}, {\tiny BTG$_U$(1500)}, {\tiny BTG$_U$(2000)}, {\tiny BTG$_U$(2500)}}

	\end{axis}	
\end{tikzpicture}};
\end{tikzpicture}

\caption{Reduction percentages relative to the 1 thread case on the \texttt{BTG}$_U$(X) program. \textbf{Left.} Reduction relative to 1 thread. 
\textbf{Right.} Reduction relative to 1 expand thread.}\label{percentage-unbal}
\end{figure}

The runtime and the reduction percentages for Example~\ref{ex:ttree} are shown respectively in Table~\ref{table:ttree} and Figure~\ref{percentage-ttree}.

\begin{table}
\begin{center}

\begin{minipage}{.65\textwidth}
 
{\scriptsize 
    \begin{tabular}{| l || c | c | c | c | c | c |}
    \hline
    \backslashbox{i}{Threads} & 1 & 2 & 3 & 4 & 5 & 6    \\ 
    \hline
    \hline
    \texttt{ttree(s$^{11}$(0))}         & 3.7 &	3.3 &3.1	 &3.1	 &3.0	 &3.1   \\
    \hline
    \texttt{ttree(s$^{12}$(0))}          & 11.7  &10.6	 &10.0	 &9.9	 &9.7	 &9.6   \\
    \hline
    \texttt{ttree(s$^{13}$(0))}           &37.1  &33.6	 & 31.6	 & 30.9	 & 30.2	 &30.1   \\
    \hline
    \texttt{ttree(s$^{14}$(0))}           & 96.6  & 83.4	 &78.4	 & 76.8	 & 74.4	 & 73.7   \\
    \hline

\end{tabular}}

\vspace{.5cm}

{\scriptsize 
    \begin{tabular}{| l || c | c | c | c | c | c |}
    \hline
    \backslashbox{i}{E. threads} & 1 & 2 & 3 & 4 & 5 & 6    \\ 
    \hline
    \hline
    \texttt{ttree(s$^{11}$(0))}         & 2.2 &	1.9 &	1.9 &	1.8 & 1.7	 &1.7   \\
    \hline
    \texttt{ttree(s$^{12}$(0))}          & 5.2 &	4.2 & 3.9	 &3.5	 &3.3	 &3.2   \\
    \hline
    \texttt{ttree(s$^{13}$(0))}           & 15.7  &11.2	 &10.1	 &9.1	 &8.4	 &8.2   \\
    \hline
    \texttt{ttree(s$^{14}$(0))}           & 40.3 &	31.5 &	 29.8	 &	26.0 & 24.2 & 23.7   \\
    \hline
    
\end{tabular}}

\end{minipage}
\begin{minipage}{.25\textwidth}

\vspace{.2cm}
\begin{tikzpicture}[scale=.5]
	\begin{axis}[xlabel=i,
		ylabel=time,legend style={at={(0.1,.9)},anchor=north west}]
	\addplot coordinates {
		(11,3.7)
		(12,11.7)
		(13,37.1)
		(14,96.6)
	};
	\addplot coordinates {
		(11,3.1)
		(12,9.6)
		(13,30.1)
		(14,73.6)
	};
	\addplot coordinates {
		(11,2.2)
		(12,5.2)
		(13,15.7)
		(14,40.3)
	};
	\addplot coordinates {
		(11,1.7)
		(12,3.2)
		(13,8.2)
		(14,23.7)
	};
	
	 \legend{{\tiny 1 thread}, {\tiny 6 threads}, {\tiny 1 expand thread}, {\tiny 6 expand threads}}

	\end{axis}
\end{tikzpicture}

\vspace{.5cm}

\end{minipage}

\end{center}
\caption{Runtime (in seconds) of the construction of the coinductive tree of \texttt{ttree(s$^i$(0))} for different values of $i$ and 
different parameters. \textbf{Left. Top}. Runtime using different number of threads
\textbf{Left. Bottom}. Runtime using different number of expansion threads. \textbf{Right.} Comparison of the time of the slowest and fastest executions using 
threads and expand threads.}\label{table:ttree}
\end{table}

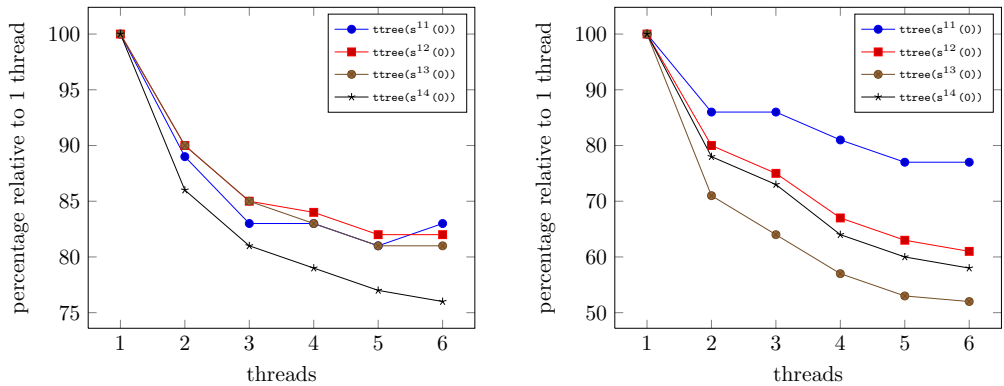
\begin{figure}
\centering
\begin{tikzpicture}
\draw (0,0) node{
\begin{tikzpicture}[scale=.75]
	\begin{axis}[xlabel=threads,
		ylabel=percentage relative to 1 thread]
	\addplot coordinates {
		(1,100)
		(2,89)
		(3,83)
		(4,83)
		(5,81)
		(6,83)
	};
	\addplot coordinates {
		(1,100)
		(2,90)
		(3,85)
		(4,84)
		(5,82)
		(6,82)
	};
	\addplot coordinates {
		(1,100)
		(2,90)
		(3,85)
		(4,83)
		(5,81)
		(6,81)
	};
	\addplot coordinates {
		(1,100)
		(2,86)
		(3,81)
		(4,79)
		(5,77)
		(6,76)
	};
	 \legend{{\tiny \texttt{ttree(s$^{11}$(0))} }, {\tiny \texttt{ttree(s$^{12}$(0))} }, {\tiny \texttt{ttree(s$^{13}$(0))} }, {\tiny \texttt{ttree(s$^{14}$(0))} }}

	\end{axis}
\end{tikzpicture}};

\draw (7,0) node{
\begin{tikzpicture}[scale=.75]
	\begin{axis}[
		xlabel=threads,
		ylabel=percentage relative to 1 thread]
	\addplot coordinates {
		(1,100)
		(2,86)
		(3,86)
		(4,81)
		(5,77)
		(6,77)
	};
	\addplot coordinates {
		(1,100)
		(2,80)
		(3,75)
		(4,67)
		(5,63)
		(6,61)
	};
	\addplot coordinates {
		(1,100)
		(2,71)
		(3,64)
		(4,57)
		(5,53)
		(6,52)
	};
	\addplot coordinates {
		(1,100)
		(2,78)
		(3,73)
		(4,64)
		(5,60)
		(6,58)
	};
	 \legend{{\tiny \texttt{ttree(s$^{11}$(0))} }, {\tiny \texttt{ttree(s$^{12}$(0))} }, {\tiny \texttt{ttree(s$^{13}$(0))} }, {\tiny \texttt{ttree(s$^{14}$(0))} }}	
	\end{axis}	
\end{tikzpicture}};
\end{tikzpicture}

\caption{Reduction percentages relative to the 1 thread case on the construction of the coinductive tree for different \texttt{ttree(s$^i$(0))} in the 
Ttree program. \textbf{Left.} Reduction relative to 1 thread.  \textbf{Right.} Reduction relative to 1 expand thread.}\label{percentage-ttree}
\end{figure}

The runtime to derive different number of answers for the \texttt{BTG} program using different parameters is shown in Figure~\ref{table:btg-generation}

\begin{table}
 
\centering
 
{\footnotesize 
    \begin{tabular}{| l || c | c | c | c | c | c |}
    \hline
    \backslashbox{solutions}{Threads} & 1 & 2 & 3 & 4 & 5 & 6    \\ 
    \hline
    \hline
    1000         &  7.3	&3.9	 &3.1	 &2.3	 &2	 &1.8   \\
    \hline
    2000         & 44.2 & 23.2	 & 18.2	 &14	 & 11.6	 & 10.6   \\
    \hline
    3000         &45.5  &26.5&	18.8	&14.7	&13.2	&11.3\\
    \hline
    4000         & 47.5 &27.2	&21.1&	17.3&	15.6	&12.9\\
    \hline
    5000         & 48.8 &33.2&	24.5&	21.8&	18.3&	13.3   \\
    \hline
    6000         & 59.2 &33.6	&25.4	&20.1	&18.2&	14.3   \\
    \hline

\end{tabular}}

\vspace{.5cm}

{\footnotesize 
    \begin{tabular}{| l || c | c | c | c | c | c |}
    \hline
    \backslashbox{solutions}{E. threads} & 1 & 2 & 3 & 4 & 5 & 6    \\ 
        \hline
    \hline
    1000         & 1.7&	1.8&	1.8&	1.7	&1.8	&1.73   \\
    \hline
    2000         & 9.5&	9.4&	10&	9.3&	9.4	&9.3  \\
    \hline
    3000         & 10.6 &	10.4 &	10.5 &	10.6	 &10.6	 &10.5  \\
    \hline
    4000         & 12.2&11&	11.3&	12.2&	12.3&	11   \\
    \hline
    5000         &12.7&	12.6&	12.8&	13&	13&	12.2  \\
    \hline
    6000         & 13.3&	13	&14.6	&14.7&	13.1&	15\\
    \hline
\end{tabular}}

\caption{Runtime to derive different number of answers for the \texttt{BTG} program using different parameters.}\label{table:btg-generation}
\end{table}

\newpage
\section{Proofs}\label{appendix:proofs}

\begin{proof}[Lemma~\ref{lemma:and-or}]\label{proof:and-or}
Let $P$ be a ground logic program and let $G=\gets A$ be an atomic ground goal. Let $T$ be the and-or parallel tree for $P$ and $G$ by definition \ref{df:andortree}. Let $T'$ be the tree for $P$ and $G$ constructed by definition \ref{cons:and-or}.
Since there are no variables within the goal or program every  \emph{mgu's} found for the unification of two terms must be the empty set of substitutions.

The root of $T$ and $T'$ is both the and-node with atom $G$ by their respective definitions. 
Given an and-node with atom $A$ in $T$ and the corresponding and-node with atom $A$ in $T'$ then there are exactly $m>0$ distinct clauses 
$C_1, \ldots , C_m$ in $P$ with $A = B^i\theta_i = \ldots = B^m\theta_m = B^i = \ldots = B^m$ by definition \ref{df:andortree} and since the \emph{mgu's} can contain no substitutions. For every such clause $C_i = B_i \gets B^i_1, \ldots ,B^i_n$ 
the and-node $A$ in $T$ therefore contains an child or-node with the child and-nodes $B^i_1, \ldots ,B^i_n$ by definition \ref{df:andortree}.
The clause-tree for $C_i$ contains the or-node with children and-nodes $B^i_1, \ldots ,B^i_n$ by definition \ref{def:clause-tree}. 
$A$ in $T'$ is by construction either the root and-node or a copy of an and-node from a clause trsee. If $A$ in $T'$ is the root node it
would have contained by construction \ref{cons:and-or} a reference in the open list to all the corresponding clause trees for $C_1, \ldots , C_m$ since all are unifiable clause trees and are referenced by construction. If it is not the root node then $A$ in $T'$ is the copy of an and-node from a clause tree which by definition of a clause tree contains a reference in the open list to all the unifiable clause trees. Since the correspond \emph{mgu's} can contain no substitutions these exactly correspond again to $C_1, \ldots , C_m$. 
By construction \ref{cons:and-or} all these references to $C_1, \ldots , C_m$ have been removed from the open list of the and-node $A$ in $T'$ and for the reference to the clause tree $C_i$ an or-node with child and-nodes $B^i_1, \ldots ,B^i_n$ has been added. Thereby both $A$ and $A'$ contain the same or-node children with and-node children $B^i_1, \ldots ,B^i_n$. 
\end{proof}

\begin{proof}[Lemma~\ref{lemma:cotreecons}]\label{proof:cotreecons}
Let $P$ be a ground logic program and let $G=\gets A$ be an atomic goal. Let $T$ be the coinductive tree for $P$ and $G$ by definition \ref{df:coindt}. 
Let $T'$ be the tree for $P$ and $G$ constructed by definition \ref{cons:cotree}.

The root of $T$ and $T'$ is both the and-node with atom $G$ by their respective definitions. 
Given an and-node with atom $A$ in $T$ and the corresponding and-node with atom $A$ in $T'$ then there are exactly $m>0$ distinct clauses 
$C_1, \ldots , C_m$ in $P$ (a clause $C_i$ has the form $B_i
  \gets B^i_1, \ldots , B^i_{n_i}$, for some $n_i$) such that $A = B_1\theta_1 =
  ... = B_m\theta_m$, for mgus $\theta_1, \ldots , \theta_m$, by definition \ref{df:coindt}. For every such clause $C_i$ 
the and-node $A$ in $T$ contains an child or-node with the child and-nodes $B^i_1\theta_i, \ldots ,B^i_{n_i}\theta_i$ by definition \ref{df:coindt}. The clause-tree for $C_i$ contains the or-node with children and-nodes $B_i, \ldots ,B^i_n$ by definition \ref{def:clause-tree}. $A$ in $T'$ is by construction either the root and-node or a copy of an and-node from a clause tree with possible substitutions $\theta_a$ applied. If $A$ in $T'$ is the root node it would have contained by construction \ref{cons:cotree} a reference in the open list to at least all the clause trees for $C_1, \ldots , C_m$ with matching root node since these are a subset of the unifiable clause trees root nodes. If it is not the root node then $A$ in $T'$ is the copy of an and-node from a clause tree with possibly some substitutions $\theta_a$ applied. Lets denote this original node atom by $A^*$ with $A=A^*\theta_a$. By definition of a clause tree the node for $A^*$ contains a reference in the open list to all the unifiable clause 
trees roots. Since $C_1, \ldots , C_m$ are matching $A$ they are also unifiable with $A$ and therefore also unfiable with $A^*$. Therefore, they are included in the open list references for $A$ and $A^*$. By construction \ref{cons:cotree} all these references have been removed from the open list and for the referenced clause tree roots that match have been added to the tree. Since references to $C_1, \ldots , C_m$ are contained in the open lists and are exactly the matching ones for each clause tree $C^i$ an or-node with child and-nodes $B^i_1, \ldots ,B^i_n$ has been added with the \emph{mgu} $\theta_i$ applied which then resulted in the and-nodes $B^i_1\theta_i, \ldots ,B^i_{n_i}\theta_i$. Thereby both $A$ and $A'$ contain the same or-node children with and-node children $B^i_1\theta_i, \ldots ,B^i_{n_i}\theta_i$. 
 
\end{proof}

\end{document}